\newtheorem{problem}{Problem}
\newtheorem{theorem}{Theorem}
\newtheorem{proof}{Proof}
\begin{document}
%
% paper title
% can use linebreaks \\ within to get better formatting as desired
\title{A System Model-Based Approach for the Control of Power Park Modules for Grid Voltage and Frequency Services}
%
%
% author names and IEEE memberships
% note positions of commas and nonbreaking spaces ( ~ ) LaTeX will not break
% a structure at a ~ so this keeps an author's name from being broken across
% two lines.
% use \thanks{} to gain access to the first footnote area
% a separate \thanks must be used for each paragraph as LaTeX2e's \thanks
% was not built to handle multiple paragraphs
%

%\author{Michael~Shell,~\IEEEmembership{Member,~IEEE,}
%        John~Doe,~\IEEEmembership{Fellow,~OSA,}
%        and~Jane~Doe,~\IEEEmembership{Life~Fellow,~IEEE}% <-this % stops a space
\author{{Bogdan MARINESCU,
        Elkhatib KAMAL,  
				Hoang-Trung NGO
        }% <-this % stops a space
        
\thanks{Bogdan MARINESCU, Elkhatib KAMAL and Hoang-Trung NGO are with Ecole Centrale Nantes -LS2N (Laboratoire des sciences du numérique de Nantes), 1 Rue de la Noë, 44000 Nantes Cedex 3, France, Email: Bogdan.Marinescu@ec-nantes.fr, Elkhatib.Ibrahim@ec-nantes.fr and Hoang-Trung.Ngo@ec-nantes.fr.}% <-this % stops a space
\thanks{This work is part of the H2020 European project POSYTYF (https://posytyf-h2020.eu/).}
}
% note the % following the last \IEEEmembership and also \thanks - 
% these prevent an unwanted space from occurring between the last author name
% and the end of the author line. i.e., if you had this:
% 
% \author{....lastname \thanks{...} \thanks{...} }
%                     ^------------^------------^----Do not want these spaces!
%
% a space would be appended to the last name and could cause every name on that
% line to be shifted left slightly. This is one of those "LaTeX things". For
% instance, "\textbf{A} \textbf{B}" will typeset as "A B" not "AB". To get
% "AB" then you have to do: "\textbf{A}\textbf{B}"
% \thanks is no different in this regard, so shield the last } of each \thanks
% that ends a line with a % and do not let a space in before the next \thanks.
% Spaces after \IEEEmembership other than the last one are OK (and needed) as
% you are supposed to have spaces between the names. For what it is worth,
% this is a minor point as most people would not even notice if the said evil
% space somehow managed to creep in.

% The paper headers
\markboth{Journal of \LaTeX\ Class Files,~Vol.~6, No.~1, January~2007}%
{Shell \MakeLowercase{\textit{et al.}}: Bare Demo of IEEEtran.cls for Journals}
% The only time the second header will appear is for the odd numbered pages
% after the title page when using the twoside option.
% 
% *** Note that you probably will NOT want to include the author's ***
% *** name in the headers of peer review papers.                   ***
% You can use \ifCLASSOPTIONpeerreview for conditional compilation here if
% you desire.

% If you want to put a publisher's ID mark on the page you can do it like
% this:
%\IEEEpubid{0000--0000/00\$00.00~\copyright~2007 IEEE}
% Remember, if you use this you must call \IEEEpubidadjcol in the second
% column for its text to clear the IEEEpubid mark.

% use for special paper notices
%\IEEEspecialpapernotice{(Invited Paper)}

\maketitle
\thispagestyle{empty}

\begin{abstract}
%\boldmath

A new control approach is proposed for the grid insertion of Power Park Modules (PPMs). It allows full participation of these modules to ancillary services. This means that, not only their control have some positive impact on the grid frequency and voltage dynamics, but they can effectively participate to existing primary and secondary control loops together with the classic thermal/inertia synchronous generators and fulfill the same specifications both from the control and contractual points of view. 

To achieve such level of performances, a \textit{system approach} based on an \textit{innovatory control model} is proposed. The latter control model drops classic hypothesis for separation of voltage and frequency dynamics used till now in order to gather these dynamics into a small size model. 

From the \textit{system point of view}, dynamics are grouped by time-scales of phenomena in the proposed control model. This results in more performant controls in comparison to classic approaches which orient controls to physical actuators (control of grid side converter and of generator side converter). Also, this allows coordination between control of converters and generator or, in case of multimachines specifications, among several PPMs. 

From the \textit{control synthesis point of view}, classic robust approaches are used (like, e.g., H-infinity synthesis).

Implementation and validation tests are presented for wind PPMs but the approach holds for any other type of PPM.

These results will be further used to control the units of the new concept of \textit{Dynamic Virtual Power Plant} introduced in the H2020 POSYTYF project.
\end{abstract}

\begin{IEEEkeywords}
Renewable energy, Power Park Modules, PMSG, MPPT, $H_{\infty}$, RoCoF, Frequency support, droop control, LMI, mixed sensibility, loop-shaping.
\end{IEEEkeywords}

% For peer review papers, you can put extra information on the cover
% page as needed:
% \ifCLASSOPTIONpeerreview
% \begin{center} \bfseries EDICS Category: 3-BBND \end{center}
% \fi
%
% For peerreview papers, this IEEEtran command inserts a page break and
% creates the second title. It will be ignored for other modes.
\IEEEpeerreviewmaketitle

\section{Introduction}
% The very first letter is a 2 line initial drop letter followed
% by the rest of the first word in caps.
% 
% form to use if the first word consists of a single letter:
% \IEEEPARstart{A}{demo} file is ....
% 
% form to use if you need the single drop letter followed by
% normal text (unknown if ever used by IEEE):
% \IEEEPARstart{A}{}demo file is ....
% 
% Some journals put the first two words in caps:
% \IEEEPARstart{T}{his demo} file is ....
% 
% Here we have the typical use of a "T" for an initial drop letter
% and "HIS" in caps to complete the first word.
\IEEEPARstart{R}{apid} development of Renewable Energy Sources (RES) brought the concept of Power Park Modules (PPMs) \cite{7x}. Indeed, RES, like, e.g., solar and wind \cite{1} are systematically connected to the grid via power electronics (like back-to-back power converter structures) (e.g., \cite{1}, \cite{2}).\\

PPMs and, generally, power electronics bring fast dynamics into power systems \cite{7}. Also, compared with dynamics of classic synchronous thermal generators, voltage and power dynamics are coupled in the same range of frequency. Old hypothesis and way of doing which separate control of voltage and power tracks should be revisited into more coordinated controls which need new concepts of modeling and regulation. 

Massive integration of RES lowers also the global inertia of the power system \cite{2}, \cite{7}. The Rate of Change of Frequency (RoCoF) is increased \cite{8x}, \cite{9x} which results in a need for fast frequency support from RES. Several approaches dealt with this as hidden inertia control \cite{7}, \cite{8}-\cite{12} or fast power reserve \cite{7}, \cite{13}-\cite{16}.

Intermittency and volatility of RES is a major difficulty for RES participation to ancillary services. This is obviously due to meteorological variations but also to technological difficulties of PPMs to stay connected to grid during large disturbing events. The latter point is crucial for increasing the actual RES penetration rate and it is a hot subject for Transmission System Operators (TSO), regulatory bodies and manufacturers \cite{entsoeGridCode}. Improvement of the controls is a key issue to overcome this.

Despite several recent research work, RES and PPMs are not fully integrated is actual secondary regulation schemes. In most cases, some grid frequency and voltage support is provided but in an indirect way. For example, the classic droop control and its variations \cite{7}, \cite{17}-\cite{18} and Maximum Power Point Tracking Algorithms (MPPT) \cite{2} \cite{3} do not allow a full integration of RES into secondary regulation and market contracts at the same level as the classic synchronous generators. To go into this direction, the control approaches are reviewed and improved at a \textit{system level} in this paper: first, a new control model is introduced to capture in a simple, direct and efficient way both frequency and voltage dynamics of all dynamic devices of interest (converters and generator of one or several PPMs). Next, control objectives and actions are distributed in a new optimal way (from the time scales point of view) opposed to allocation to each actuator (converter) as in the classic approaches. Finally, coordination and robustness of the resulting closed-loop system are improved by using advanced multivariable robust synthesis methods (like, e.g., H-infinity).

The paper is organized as follows: in Section II the objectives are formulated from the overall \textit{system point of view}. Classic hypothesis and approaches for control are critically and constructively revisited towards new approaches. In Section III, the control objectives are formulated from the \textit{control point of view} along with a new control framework. This new control is applied to wind PMSG in Section IV. The new control is based on a H-infinity regulator which is presented in Section V. Simulation results that illustrate the effectiveness of the proposed strategies are presented in Section VI. Conclusions and future prospects are presented in Section VII.

\section{System objectives and review of classic hypothesis and approaches}\label{section_SystemObjectives}
Control specifications of a grid connected generator are generally twofold: local (machine) specifications and global (grid) specifications. The local ones are to ensure good operation of the machine and manage its active (P) and reactive (Q) power generation. In function of technology, other variables might be regulated, as for example, blade pitch/position and shaft speed if a wind turbine is involved in the PPM. Global objectives target ancillary services. This means also to manage P and Q generation but with different specifications. Indeed, for this, the PPM should regulate its voltage not only for secure and optimal run of the generator itself but also for the neighbor AC grid. This means that the voltage in different (distant) strategic points of the grid should have specific response in case of several (prespecified by TSOs) dimensioning grid incidents. Also, the grid frequency should be maintained. In the measure of possible, large PPMs should also fully participate to secondary voltage (V) and frequency (f) regulations, with the same duties and rights as classic synchronous generators. This means also participation to market mechanisms. 

Traditionally, V and f regulations are carried out in an independent and separate manner. This is due to the fact that V and f dynamics are naturally separated in power systems which contain only large inertia synchronous generators (see, e.g., \cite{Kundur}). After their synthesis, the 2 controllers are installed on the generator and no parasitic interaction of their dynamics is registered because of the decoupling of the two dynamics. 

This work hypothesis is less and less valid in case of power systems with high rate of power electronics. Indeed, for a power converter, rapid control for all dynamics is possible and needed. V and f are no longer decoupled both for grid following and grid forming modes \cite{1x}, \cite{2x} of control. Coordination should thus be achieved in control.

Classic vector control is based on separation of the control actions according to the actuators. For the example of a wind PPM given in Fig. \ref{fig:PMSG}, a control is implemented for the grid side converter (GSC) and another one for the machine side converter (MSC) (see, e.g., \cite{vector_control_1}-\cite{vector_control_6}). 

%\begin{figure}[H]
%    \centering
%    \includegraphics[scale=0.37]{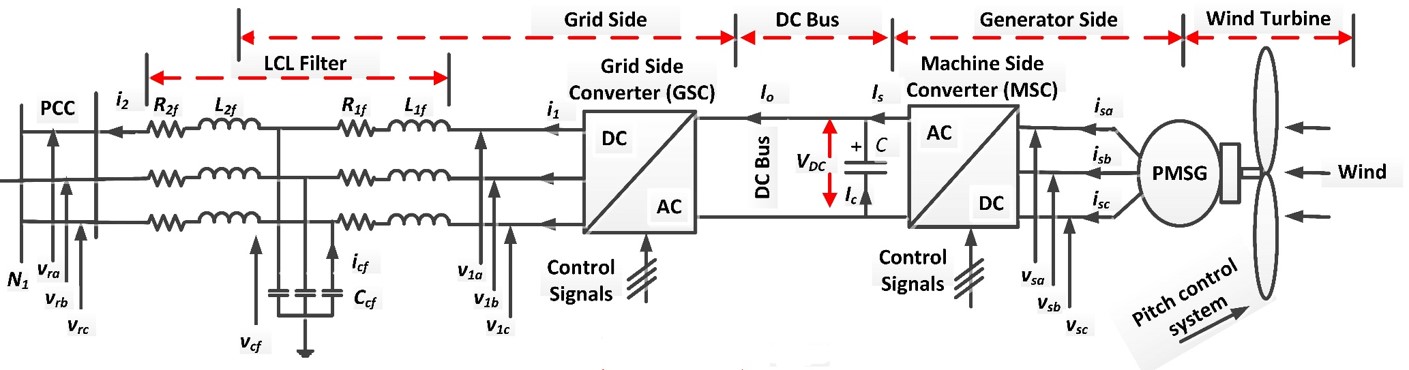}
%    \caption{The VSC-HVDC power structure}
%    \label{fig:PMSG}
%\end{figure}

Such separation of control loops is also inherited in many more advanced controls \cite{3x}, \cite{4x}.

Also, in vector control, dynamics are artificially separated by the two layers control structure: for each converter, poles of an inner loop control are placed at faster locations as the ones of the outer loop control (see Appendix \ref{subsectionVectorControl} where principles of such control are briefly recalled).  This time-scale separation allows one to synthesize controllers one by one (even though interactions still exist). Notice that this is not necessary as multivariable coordinated control can be performed as shown in the next section. It is nor optimal since changing the dynamics of the system will result in lowering robustness of the resulting closed-loop (see, e.g., \cite{Post}).

\section{Control objectives and a new hierarchical time-decoupling and fully coordinated control strategy}
\label{section:time_decoupling_and_fully_coordinated_control_strategy}

\subsection{Control objectives}
The facts mentioned above conducted us to a new control approach in which objectives and dynamics are managed closer to the open-loop dynamics of the system. Also, the system is considered as global as possible, i.e., the PPMs to be controlled and the power system in which they are inserted. For each PPM, the control objectives are translation of the physical/system objectives explained in Section \ref{section_SystemObjectives} at both local and grid levels:

\begin{itemize}
	\item local Q or (equivalently) V control, i.e., tracking a reference $Q_{ref}$ or $V_{ref}$ at the grid connection point. The latter are elements of vector $Y^{ref}$ in Fig. \ref{fig:Time_and_space_separation}.
	\item regulation of the DC bus voltage of the power electronics part, i.e., tracking of a reference $V_{DC_{ref}}$ (element of vector $Y_{0}^{ref}$ in Fig. \ref{fig:Time_and_space_separation})
	\item regulation of the generated active power to a given setpoint $P_{ref}$ (element of vector $Y^{ref}$ in Fig. \ref{fig:Time_and_space_separation})
	\item primary grid frequency support: minimization of the gap $\Delta f$ from actual grid frequency to nominal frequency (50 or 60Hz) in case of grid faults and variations.
	\item secondary grid frequency support: issue the reference $P_{ref}$ from the secondary frequency control level (AGC) of the AC zone to which the PPM belongs to (which contributes to $Y_{1}^{ref}$ in Fig. \ref{fig:Time_and_space_separation})
	\item secondary grid voltage support: issue the reference $V_{ref}$ for the voltage of the PPM grid connection point from the secondary voltage control level of the AC zone to which the PPM belongs to ($Y_{2}^{ref}$ in Fig. \ref{fig:Time_and_space_separation})
\end{itemize}

\subsection{Structure of the control}
To reach these objectives, a new control framework is proposed to reach maximum coordination between all PPMs actuators. The main difference with the classic vector control is that the control is not structured around each actuator, but according to the time response (frequency band) of the actuators and open-loop plant dynamics. This led us to the structure in Fig. \ref{fig:Time_decoupling}. Plant dynamics have been split into 3 categories:
\begin{itemize}
	\item very fast ones which correspond to dynamics of voltage/reactive power and active power variation needed to improve the grid frequency gradient (RoCoF).
	\item fast ones related to primary frequency regulation of the PPM. The classic control here is the \textit{droop control}.
	\item slow dynamics (more than one minute) which correspond to PPM participation to \textit{secondary grid controls}
\end{itemize}

Based on this time-scale separation, the proposed control method is structured as in Fig. \ref{fig:Time_and_space_separation}: Plant 0 is the plant to be controlled, i.e., the PPM and all grid dynamics of interest. It will be shown in the next section how the latter ones will be captured in an original \textit{control model}. Three stages of control are proposed according to the time scales. The closed-loop obtained at one stage is the plant for the next stage. In this way a hierarchical and sequential synthesis is possible, with, at each level, account for the faster controls of lower levels and with minimal risk of parasitic dynamic interactions. Notice also that this strategy is compliant with actual organization of controls in power systems (structured in primary/secondary layers) and opens the way of direct integration of PPMs into existing power systems controls and market mechanisms.

\begin{figure}[H]
    \centering
    \includegraphics[width=0.45\textwidth,center]{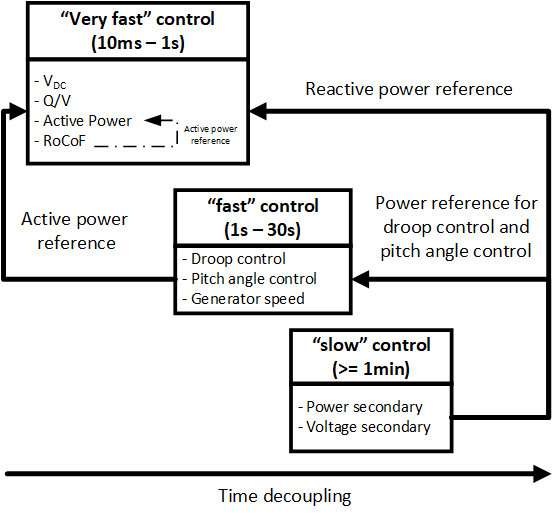}
    \caption{Time decoupling structure for grid connected PPMs} 
    \label{fig:Time_decoupling}   
\end{figure}

\begin{figure}[H]
    \centering
    \includegraphics[width=0.4\textwidth,center]{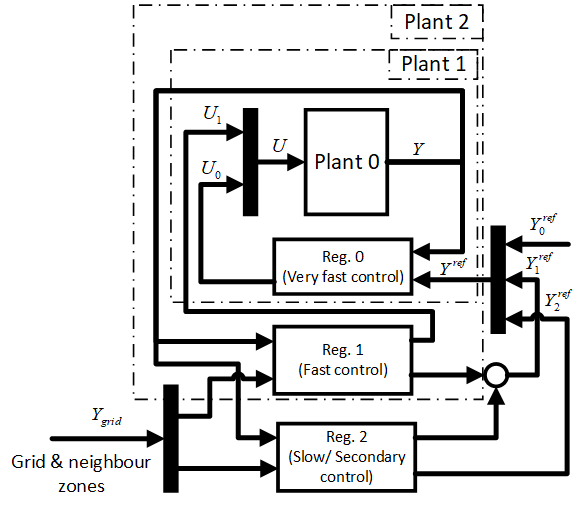}
    \caption{Time and space separation of the control of the grid connected PPMs} 
    \label{fig:Time_and_space_separation}   
\end{figure}

\subsection{A new concept for the control model}

Separation of the V and f dynamics mentioned in Section \ref{section_SystemObjectives} led till now to 2 different models in power systems community: for V regulation, the Single Machine Infinite Bus (SMIB) model given in Fig \ref{fig:V_control_model} and for f regulation, the one in Fig. \ref{fig:f_control_model} (e.g., \cite{Kundur}, \cite{Bevrani}). As explained before, this hypothesis of separation of dynamics is no longer valid in power systems with high RES penetration. Indeed, in case of power converters, V and f dynamics are mixed and both fast. As a consequence, V and f controls should be synthesized together, i.e., in coordination. Advanced methods of control need a \textit{control model}, i.e., a low dimensional model of the plant to be controlled. This is extracted from the \textit{full model} of the plant and should preserve the dynamics of interest. The model we propose here is given in Fig. \ref{fig:control_model_1}. It consists of the full model of the PPM to be controlled, an equivalent AC line of reactance $X_\infty$ and a Grid Dynamic Equivalent (bloc GDE) which provides the grid frequency. The line $X_\infty$ plays the same role as in classic SMIB model in Fig. \ref{fig:V_control_model}. This accounts for the grid short-circuit power at PPM connection bus A and it is computed in a standard way (see, e.g., \cite{12x}). The difference with the SMIB is that electrical frequency of bus B is not fixed at the nominal grid frequency (50 or 60Hz), but given by GDE by the following swing equation

\begin{equation}\label{GDEswing}
	2H\frac{d\omega_f}{dt}=P_G-P_L-D_u\Delta \omega_f
\end{equation}

and the three-phase voltage dynamics

\begin{equation}\label{GDEvoltage}
\begin{array}{l}
\begin{aligned}
	{V_B}^a=Vsin(\theta_f)\\
	{V_B}^b=Vsin(\theta_f-2\frac{\pi}{3})\\
	{V_B}^c=Vsin(\theta_f+2\frac{\pi}{3}).
\end{aligned}
	\end{array}
\end{equation}

H is the equivalent inertia of the rest of the system (in which the controlled PPM is inserted), $P_G$ is the global active power produced in the rest of the system and $P_L$ corresponds to the global load of the system. $P_m$ is a constant input for the PPM control problem. Dynamics (\ref{GDEswing}) is stabilized by the damping factor $D_u$ and a simple integrator for deviation of $\omega_f$ from the nominal grid frequency if a secondary frequency control is considered. H is computed by classic equivalencing methods (e.g., \cite{Kundur}) used in frequency studies. The difference between GDE and model in Fig. \ref{fig:f_control_model} is that GDE captures also voltage dynamics in compliance with the f ones. 

\begin{figure}[H]
    \centering
    \includegraphics[width=0.3\textwidth,center]{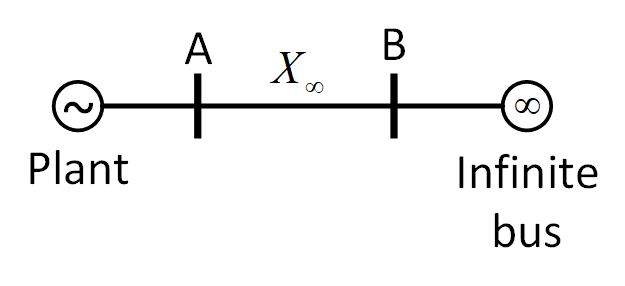}
    \caption{Voltage control model: SMIB} 
    \label{fig:V_control_model}   
\end{figure}

\begin{figure}[H]
    \centering
    \includegraphics[width=0.4\textwidth,center]{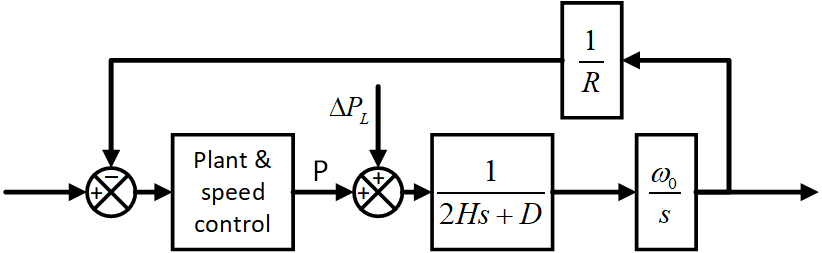}
    \caption{Frequency control model} 
    \label{fig:f_control_model}   
\end{figure}

\begin{figure}[H]
    \centering
    \includegraphics[width=0.3\textwidth,center]{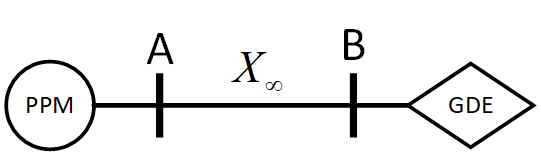}
    \caption{New control model for single PPM} 
    \label{fig:control_model_1}   
\end{figure}

% \begin{minipage}[t][][b]{0.2\textwidth}
% The first optional argument, t, defines the 'anchor' of the minipage, so here the top is placed on the baseline. The last optional argument, b, defines where in the minipage the content should be placed, here it is placed at the bottom.

\begin{figure}[H]
    \centering
    \begin{minipage}[b][][b]{0.2\textwidth}
    	\centering
    	\includegraphics[width=1\textwidth,center]{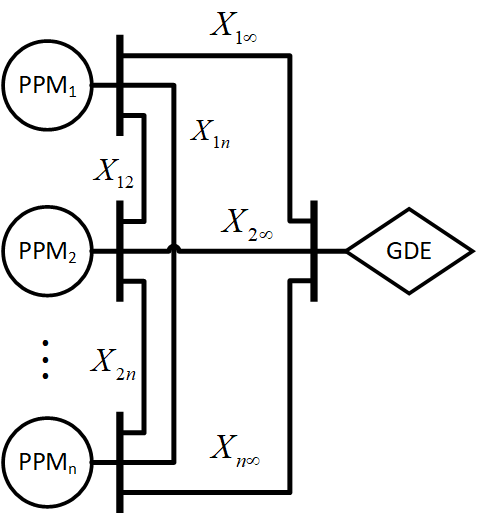}
    	%\caption{Control model 2a}
    	%\label{fig:control_model_2a}   
    \end{minipage}\hfill
    \begin{minipage}[b][][b]{0.24\textwidth}
    	\centering
    	\includegraphics[width=1\textwidth,center]{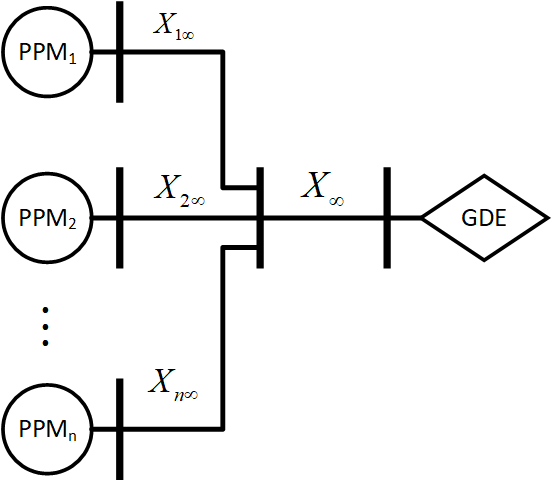}
    	%\caption{Control model 2b}
    	%\label{fig:control_model_2b}
    \end{minipage}
		\caption{New control model for multiple PPMs}
		\label{controlModelMulti}
\end{figure}

Notice also that a common practice is to use a large classic synchronous generator as equivalent for the rest of the system. Indeed, one can thus chose its inertia constant H close to the inertia of the rest of the system. However, this kind of equivalent is not suitable for at least 2 reasons: first, as the model of a physical synchrounous generator is used, one should also add voltage and frequency regulations for this machine. The latter ones have no significance in the model and their parameters would influence the resulting dynamics. Next, the order of such a model (the dimension of its state) is much higher than the one of the proposed GDE (which is 1). 

If several PPMs have to be controlled, the structure of the control model is one of the 2 equivalent ones given in Fig. \ref{controlModelMulti}.

\section{Application to wind PPMs}

Control strategy presented in the above section will be here developed for 2 wind PPMs based on Permanent Magnet Synchronous Generators (PMSGs). 

\subsection{Synthesis of the control model}
The control model of the first layer of regulation (i.e., Plant 0 in Fig. \ref{fig:Time_and_space_separation}) is thus the one in Fig. \ref{controlModelMulti} with n=2 and the PPM of Fig. \ref{fig:PMSG}, i.e., the one in Fig. \ref{fig:PMSGs_and_grid}.

\begin{figure}[H]
    \centering
    \includegraphics[scale=0.37]{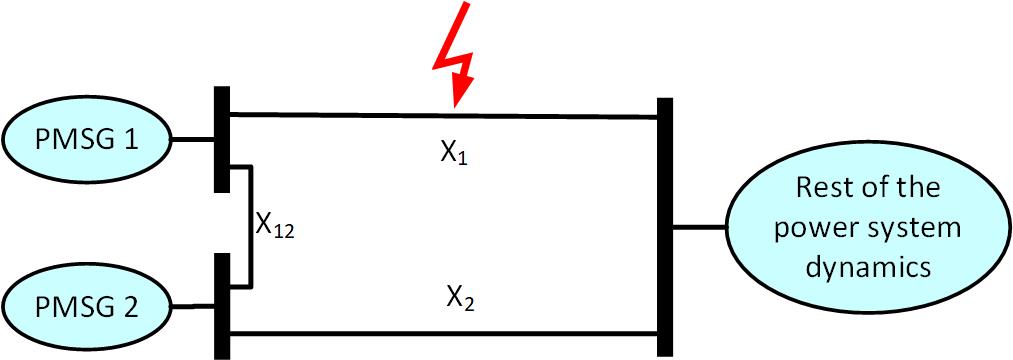}
    \caption{Two PMSGs connected to the grid}
    \label{fig:PMSGs_and_grid}
\end{figure}

\begin{figure}[H]
    \centering
    \includegraphics[scale=0.37]{}
    \caption{Wind PMSG based PPM}
    \label{fig:PMSG}
\end{figure}

Using the nonlinear model in \cite{1e} for the 2 PMSGs, one obtains 

\begin{equation}\label{eq:nonlinear_PMSG1}
{\tiny
\left\{ 
\begin{array}{l}
\begin{aligned}
\frac{{d{i_{11d}}}}{{dt}} =&  - {\omega _{fb}}\frac{{{R_{11f}}{i_{11d}}}}{{{L_{11f}}}} + {\omega _{fb}}{\omega _f}{i_{11q}} - {\omega _{fb}}\frac{{{v_{cf1d}}}}{{{L_{11f}}}} + \frac{1}{2}{\omega _{fb}}\frac{{{\beta _{11d}}{v_{DC1}}}}{{{L_{11f}}}}
\end{aligned}\\
\begin{aligned}
\frac{{d{i_{11q}}}}{{dt}} =&  - {\omega _{fb}}\frac{{{R_{11f}}{i_{11q}}}}{{{L_{11f}}}} - {\omega _{fb}}{\omega _f}{i_{11d}} - {\omega _{fb}}\frac{{{v_{cf1q}}}}{{{L_{11f}}}} + \frac{1}{2}{\omega _{fb}}\frac{{{\beta _{11q}}{v_{DC1}}}}{{{L_{11f}}}}
\end{aligned}\\
\begin{aligned}
&\frac{{d{i_{21d}}}}{{dt}} =  - {\omega _{fb}}\frac{{{R_{21f}}{i_{21d}}}}{{{L_{21f}}}} + {\omega _{fb}}{\omega _f}{i_{21q}} + {\omega _{fb}}\frac{{{v_{cf1d}}}}{{{L_{21f}}}}\\
& - {\omega _{fb}}\frac{1}{{{L_{21f}}}}\frac{1}{{{n_1}}}\left( {{a_{r1d}}{i_{21d}} + {b_{r1d}}{i_{21q}} + {c_{r1d}}{i_{22d}} + {d_{r1d}}{i_{22q}} + {e_{r1d}}V} \right)
\end{aligned}\\
\begin{aligned}
&\frac{{d{i_{21q}}}}{{dt}} =  - {\omega _{fb}}\frac{{{R_{21f}}{i_{21q}}}}{{{L_{21f}}}} - {\omega _{fb}}{\omega _f}{i_{21d}} + {\omega _{fb}}\frac{{{v_{cf1q}}}}{{{L_{21f}}}}\\
& - {\omega _{fb}}\frac{1}{{{L_{21f}}}}\frac{1}{{{n_1}}}\left( {{a_{r1q}}{i_{21d}} + {b_{r1q}}{i_{21q}} + {c_{r1q}}{i_{22d}} + {d_{r1q}}{i_{22q}} + {e_{r1q}}V} \right)
\end{aligned}\\
\begin{aligned}
\frac{{d{v_{cf1d}}}}{{dt}} =& {\omega _{fb}}\frac{{{i_{11d}}}}{{{C_{f1}}}} - {\omega _{fb}}\frac{{{i_{21d}}}}{{{C_{f1}}}} + {\omega _{fb}}{\omega _f}{v_{cf1q}}
\end{aligned}\\
\begin{aligned}
\frac{{d{v_{cf1q}}}}{{dt}} =& {\omega _{fb}}\frac{{{i_{11q}}}}{{{C_{f1}}}} - {\omega _{fb}}\frac{{{i_{21q}}}}{{{C_{f1}}}} - {\omega _{fb}}{\omega _f}{v_{cf1d}}
\end{aligned}\\
\begin{aligned}
\frac{{d{i_{s1d}}}}{{dt}} =&  - {\omega _{fb}}\frac{{{R_{s1}}{i_{s1d}}}}{{{L_{s1}}}} + {\Omega _b}{p_1}{\Omega _1}{i_{s1q}} - \frac{1}{2}{\omega _{fb}}\frac{{{\beta _{s1d}}{v_{DC1}}}}{{{L_{s1}}}}
\end{aligned}\\
\begin{aligned}
\frac{{d{i_{s1q}}}}{{dt}} =&  - {\omega _{fb}}\frac{{{R_{s1}}{i_{s1q}}}}{{{L_{s1}}}} - {\Omega _b}{p_1}{\Omega _1}{i_{s1d}} + {\Omega _b}\frac{{{p_1}{\Psi _{f1}}{\Omega _1}}}{{{L_{s1}}}} - \frac{1}{2}{\omega _{fb}}\frac{{{\beta _{s1q}}{v_{DC1}}}}{{{L_{s1}}}}
\end{aligned}\\
\begin{aligned}
\frac{{d{v_{DC1}}}}{{dt}} =&  - {\omega _{fb}}\frac{3}{{4{C_1}}}{\beta _{11d}}{i_{11d}} - {\omega _{fb}}\frac{3}{{4{C_1}}}{\beta _{11q}}{i_{11q}} + {\omega _{fb}}\frac{3}{{4{C_1}}}{\beta _{s1d}}{i_{s1d}}\\
& + {\omega _{fb}}\frac{3}{{4{C_1}}}{\beta _{s1q}}{i_{s1q}}
\end{aligned}\\
\begin{aligned}
\frac{{d{\Omega _1}}}{{dt}} =& \frac{1}{{2{H_1}}}\frac{{{P_{m1}}}}{{{\Omega _1}}} - \frac{1}{{2{H_1}}}\frac{{{\Omega _b}}}{{{\omega _{fb}}}}\frac{3}{2}{p_1}{\Psi _{f1}}{i_{s1q}} - \frac{1}{{2{H_1}}}{D_{L1}}{\Omega _1}
\end{aligned}
\end{array} 
\right.
}
\end{equation}

\begin{equation}\label{eq:nonlinear_PMSG2}
{\tiny
\left\{
\begin{array}{l}
\begin{aligned}
\frac{{d{i_{12d}}}}{{dt}} =&  - {\omega _{fb}}\frac{{{R_{12f}}{i_{11d}}}}{{{L_{12f}}}} + {\omega _{fb}}{\omega _f}{i_{12q}} - {\omega _{fb}}\frac{{{v_{cf2d}}}}{{{L_{12f}}}} + \frac{1}{2}{\omega _{fb}}\frac{{{\beta _{12d}}{v_{DC2}}}}{{{L_{12f}}}}
\end{aligned}\\
\begin{aligned}
\frac{{d{i_{12q}}}}{{dt}} =&  - {\omega _{fb}}\frac{{{R_{12f}}{i_{11q}}}}{{{L_{12f}}}} - {\omega _{fb}}{\omega _f}{i_{12d}} - {\omega _{fb}}\frac{{{v_{cf2q}}}}{{{L_{12f}}}} + \frac{1}{2}{\omega _{fb}}\frac{{{\beta _{12q}}{v_{DC2}}}}{{{L_{12f}}}}
\end{aligned}\\
\begin{aligned}
&\frac{{d{i_{22d}}}}{{dt}} =  - {\omega _{fb}}\frac{{{R_{22f}}{i_{22d}}}}{{{L_{22f}}}} + {\omega _{fb}}{\omega _f}{i_{22q}} + {\omega _{fb}}\frac{{{v_{cf2d}}}}{{{L_{22f}}}}\\
& - {\omega _{fb}}\frac{1}{{{L_{22f}}}}\frac{1}{{{n_2}}}\left( {{a_{r2d}}{i_{21d}} + {b_{r2d}}{i_{21q}} + {c_{r2d}}{i_{22d}} + {d_{r2d}}{i_{22q}} + {e_{r2d}}V} \right)
\end{aligned}\\
\begin{aligned}
&\frac{{d{i_{22q}}}}{{dt}} =  - {\omega _{fb}}\frac{{{R_{22f}}{i_{22q}}}}{{{L_{22f}}}} - {\omega _{fb}}{\omega _f}{i_{22d}} + {\omega _{fb}}\frac{{{v_{cf2q}}}}{{{L_{22f}}}}\\
& - {\omega _{fb}}\frac{1}{{{L_{22f}}}}\frac{1}{{{n_2}}}\left( {{a_{r2q}}{i_{21d}} + {b_{r2q}}{i_{21q}} + {c_{r2q}}{i_{22d}} + {d_{r2q}}{i_{22q}} + {e_{r2q}}V} \right)
\end{aligned}\\
\begin{aligned}
\frac{{d{v_{cf2d}}}}{{dt}} =& {\omega _{fb}}\frac{{{i_{12d}}}}{{{C_{f2}}}} - {\omega _{fb}}\frac{{{i_{22d}}}}{{{C_{f2}}}} + {\omega _{fb}}{\omega _f}{v_{cf2q}}
\end{aligned}\\
\begin{aligned}
\frac{{d{v_{cf2q}}}}{{dt}} =& {\omega _{fb}}\frac{{{i_{12q}}}}{{{C_{f2}}}} - {\omega _{fb}}\frac{{{i_{22q}}}}{{{C_{f2}}}} - {\omega _{fb}}{\omega _f}{v_{cf2d}}\\
\frac{{d{i_{s2d}}}}{{dt}} =&  - {\omega _{fb}}\frac{{{R_{s2}}{i_{s2d}}}}{{{L_{s2}}}} + {\Omega _b}{p_2}{\Omega _2}{i_{s2q}} - \frac{1}{2}{\omega _{fb}}\frac{{{\beta _{s2d}}{v_{DC2}}}}{{{L_{s2}}}}
\end{aligned}\\
\begin{aligned}
\frac{{d{i_{s2q}}}}{{dt}} =&  - {\omega _{fb}}\frac{{{R_{s2}}{i_{s2q}}}}{{{L_{s2}}}} - {\Omega _b}{p_2}{\Omega _2}{i_{s2d}} + {\Omega _b}\frac{{{p_2}{\Psi _{f2}}{\Omega _2}}}{{{L_{s2}}}} - \frac{1}{2}{\omega _{fb}}\frac{{{\beta _{s2q}}{v_{DC2}}}}{{{L_{s2}}}}
\end{aligned}\\
\begin{aligned}
\frac{{d{v_{DC2}}}}{{dt}} =&  - {\omega _{fb}}\frac{3}{{4{C_2}}}{\beta _{12d}}{i_{12d}} - {\omega _{fb}}\frac{3}{{4{C_2}}}{\beta _{12q}}{i_{12q}} + {\omega _{fb}}\frac{3}{{4{C_2}}}{\beta _{s2d}}{i_{s2d}}\\
& + {\omega _{fb}}\frac{3}{{4{C_2}}}{\beta _{s2q}}{i_{s2q}}
\end{aligned}\\
\begin{aligned}
\frac{{d{\Omega _2}}}{{dt}} =& \frac{1}{{2{H_2}}}\frac{{{P_{m2}}}}{{{\Omega _2}}} - \frac{1}{{2{H_2}}}\frac{{{\Omega _b}}}{{{\omega _{fb}}}}\frac{3}{2}{p_2}{\Psi _{f2}}{i_{s2q}} - \frac{1}{{2{H_2}}}{D_{L2}}{\Omega _2};{D_{L2}} = \frac{{\Omega _b^2{f_2}}}{{{S_b}}}
\end{aligned}
\end{array}
\right.
}
\end{equation}

For the rest of the above presented control model, one should consider (\ref{GDEswing}) with ${D_u} = \frac{{\omega _b^2D}}{{{S_b}}}$,  $P_G=P_m+P_e$, where $P_m$ is the generated power in the rest of the system and 

\begin{equation}\label{eq:grid_dynamics}
{\small
{P_e} = {a_{Pe}}{i_{21d}} + {b_{Pe}}{i_{21q}} + {c_{Pe}}{i_{22d}} + {d_{Pe}}{i_{22q}}
}
\end{equation}

is the power generated by the two PMSGs.

The dynamics of pitch angle of the two PMSGs are totally decoupled from the rest of the PMSGs dynamics \cite{45} - \cite{46}:

\begin{equation}\label{eq:pitch_angle}
\left\{ \begin{array}{l}
\frac{{d{\beta _{PMSG1}}}}{{dt}} =  - \frac{1}{{{\tau _{C1}}}}{\beta _{PMSG1}} + \frac{1}{{{\tau _{C1}}}}{\beta _{PMSG1ref}}\\
\frac{{d{\beta _{PMSG2}}}}{{dt}} =  - \frac{1}{{{\tau _{C2}}}}{\beta _{PMSG2}} + \frac{1}{{{\tau _{C2}}}}{\beta _{PMSG2ref}}
\end{array} \right.
\end{equation}

where

\[\left\{ \begin{array}{l}
{\beta _{\min }} \le {\beta _{PMSG1}},{\beta _{PMSG2}} \le {\beta _{\max }}\\
{\left( {\frac{{d\beta }}{{dt}}} \right)_{\min }} \le \frac{{d{\beta _{PMSG1}}}}{{dt}},\frac{{d{\beta _{PMSG2}}}}{{dt}} \le {\left( {\frac{{d\beta }}{{dt}}} \right)_{\max }}
\end{array} \right.\]

Typically, as in \cite{45} - \cite{46}, these limitations come from the servo motor:

\[\left\{ \begin{array}{l}
{\tau _{C1}} = 0.2\left( s \right);{\tau _{C2}} = 0.2\left( s \right)\\
{\beta _{\min }} = 0\left( {\deg } \right);{\beta _{\max }} = 20\left( {\deg } \right)\\
{\left( {\frac{{d\beta }}{{dt}}} \right)_{\min }} =  - 10\left( {\deg } \right);{\left( {\frac{{d\beta }}{{dt}}} \right)_{\max }} = 10\left( {\deg } \right)
\end{array} \right.\]

The pitch angle will then be controlled separately from the dynamics of the two PMSGs.\\ 

The active power will be calculated at the generator rather than at the converter to avoid the mismatch in generating generator speed reference based on MPPT and deloaded control

\begin{equation}\label{eq:active_power}
\left\{ \begin{gathered}
  {P_1} = \frac{3}{2}\frac{{{\Omega _b}}}{{{\omega _{fb}}}}p{\psi _{f1}}{i_{s1q}}{\Omega _1} \hfill \\
  {P_2} = \frac{3}{2}\frac{{{\Omega _b}}}{{{\omega _{fb}}}}p{\psi _{f2}}{i_{s2q}}{\Omega _2}. \hfill \\ 
\end{gathered}  \right.
\end{equation}

The reactive power will be calculated from the voltages of capacitor in the LCL filter rather than directly from the converters to avoid direct relationship between input and output of the system

\begin{equation}\label{eq:reactive_power}
\left\{ \begin{gathered}
  {Q_1} = {i_{11d}}{v_{cf1q}} - {i_{11q}}{v_{cf1d}} \hfill \\
  {Q_2} = {i_{12d}}{v_{cf2q}} - {i_{12q}}{v_{cf2d}}. \hfill \\ 
\end{gathered}  \right.
\end{equation}

\subsection{Global structure of the proposed control}

According to the time-separation strategy presented in Section \ref{section:time_decoupling_and_fully_coordinated_control_strategy}, to satisfy the control specifications listed in Sections \ref{section_SystemObjectives} and \ref{section:time_decoupling_and_fully_coordinated_control_strategy}, the structure given in Fig. \ref{fig:global_frequency_support} is proposed for the control of 2 wind PPMs. At this stage, droop control and control for RoCoF improvement were inherited from the classic approach presented in Appendices \ref{DroopControl} and \ref{InertiaControl}. They are placed in fast and, respectively, very fast parts of the global scheme. As they are not model-based controls, a state-space model will be constructed (in the next section) only for Plant 0.

\begin{figure}%[H]
    \centering
    \includegraphics[scale=0.6]{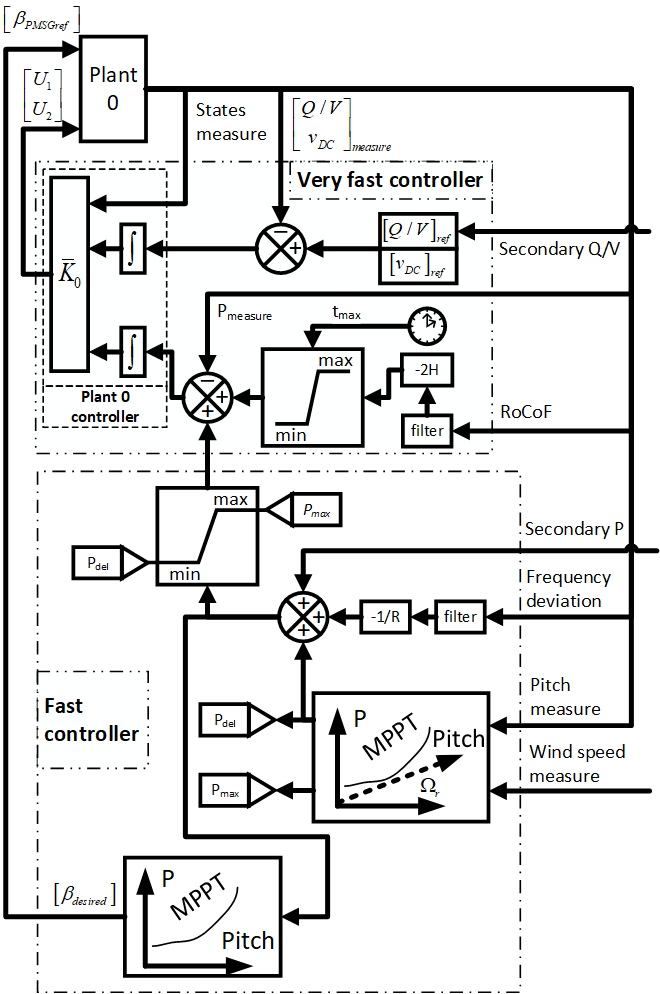}
    \caption{Global control for PMSG with local services and frequency support}
    \label{fig:global_frequency_support}
\end{figure}

\subsection{Augmented system for plant 0 controller}

For the very fast loop control, consider the mixed sensibility control structure as in Fig. \ref{fig:Very_fast_loop}.

\begin{figure}[H]
    \centering
    \includegraphics[scale=0.6]{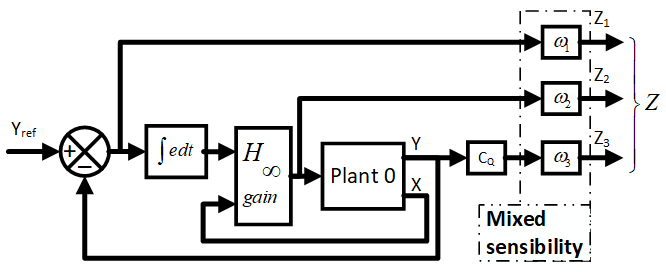}
    \caption{Very fast loop control}
    \label{fig:Very_fast_loop}
\end{figure}

Based on  \eqref{eq:nonlinear_PMSG1}, \eqref{eq:nonlinear_PMSG2},  \eqref{eq:grid_dynamics}, \eqref{eq:pitch_angle}, \eqref{eq:active_power} and \eqref{eq:reactive_power}, the system can be formed in the state-space form with disturbance  

\begin{equation}
\label{eq:PMSG_state_space_system}
\left\{ {\begin{array}{*{20}{l}}
  {{{\dot X}_{e0}} = A{X_{e0}} + {B_1}U + {B_2}W} \\ 
  {{Y_{e0}} = C{X_{e0}}} 
\end{array}} \right.
\end{equation}

The state, input, output and disturbance of the system are

{\small
\[\begin{gathered}
  {X_1} = \left[ {{i_{11d}},{i_{11q}},{i_{21d}},{i_{21q}},{v_{cf1d}},{v_{cf1q}},{i_{s1d}},{i_{s1q}},{v_{DC1}},{\Omega _1}} \right] \hfill \\
  {X_2} = \left[ {{i_{12d}},{i_{12q}},{i_{22d}},{i_{22q}},{v_{cf2d}},{v_{cf2q}},{i_{s2d}},{i_{s2q}},{v_{DC2}},{\Omega _2}} \right] \hfill \\
  {X_{e0}} = {\left[ {\begin{array}{*{20}{c}}
  {{X_1}}&{{X_2}}&{{\omega _f}} 
\end{array}} \right]^T};W = {\left[ {\begin{array}{*{20}{c}}
  {{P_{m1}}}&{{P_{m2}}}&{{P_L}} 
\end{array}} \right]^T} \hfill \\
  {U_1} = \left[ {{\beta _{11d}},{\beta _{11q}},{\beta _{21d}},{\beta _{21q}}} \right];{U_2} = \left[ {{\beta _{12d}},{\beta _{12q}},{\beta _{22d}},{\beta _{22q}}} \right] \hfill \\
  U = \left[ {\begin{array}{*{20}{c}}
  {{U_1}}&{{U_2}} 
\end{array}} \right] \hfill \\
  {Y_{e0}} = {\left[ {\begin{array}{*{20}{l}}
  {{Q_1}}&{{P_1}}&{{v_{DC1}}}&{{Q_2}}&{{P_2}}&{{v_{DC2}}} 
\end{array}} \right]^T} \hfill \\ 
\end{gathered} \]
}

The error between output and the reference is

\begin{equation}
{e_{e0}} =  - C{X_{e0}} + {Y_{e0ref}}
\end{equation}

where the entries of $Y_{e0ref}$ are the reactive power references, the active power references and the DC voltage references.

 Weight $\omega_3$ will improve high-frequencies roll-off in order to cancel the oscillatory effect of LCL filters in Fig. \ref{fig:PMSG}.  Indeed, it is well-known that these filters provide poor damped oscillatory modes at high frequency, around 500Hz in the case of our system. As this frequency is much higher than the one related to the very fast part of our control, roll-off improvement via loop-shaping will be sufficient. Modal analysis (not presented here because of the lack of the space) showed that these modes are observable in the Q measures. $C_Q=\left[ {\begin{array}{*{20}{c}}
  1&0&0&0&0&0 \\ 
  0&0&0&1&0&0 
\end{array}} \right]$ was chosen to target reactive power at the output of this bloc. The structure of the filter $\omega _3$ is standard (see, e.g., \cite{Post}). Let

\begin{equation}
\label{eq:filter_omega_3}
\left\{ \begin{gathered}
  {{\dot X}_{{\omega _3}}} = {A_{{\omega _3}}}{X_{{\omega _3}}} + {B_{{\omega _3}}}{C_Q}CX \hfill \\
  {Y_{{\omega _3}}} = {C_{{\omega _3}}}{X_{{\omega _3}}} + {D_{{\omega _3}}}{C_Q}CX \hfill \\ 
\end{gathered}  \right.
\end{equation}

be its state representation.

The weight $\omega _1$ is usually used to improve reference tracking and disturbance rejection. However, in our case, the integral action structure in Fig. \ref{fig:Very_fast_loop} is perfectly fit for zero-error reference-tracking based on internal model principle. Also, to comply with $H_{\infty}$ optimization to minimize the norm from disturbance to output error, which is the best way for disturbance rejection, $\omega _1$ is chosen as unit gain: $\omega _1 = 1$.

As in our situation there is no particular need for limiting control effort in a particular frequency band, to simplify the control structure, $\omega _2$ will be ignored: $\omega _2 = 0$. 

The \textit{extended} state for the purpose of step reference tracking are

\begin{equation}
{{\bar X}_{e0}} = {\left[ {\begin{array}{*{20}{c}}
  {{X_{e0}}}&{{X_{{\omega _3}}}}&{Int\_e} 
\end{array}} \right]^T};Int\_e = \int {{e_{e0}}dt}
\end{equation}

The resulting extended system is 

\begin{equation}\label{eq:extended_very_fast_system}
\left\{ \begin{gathered}
  {{{\dot {\bar X}}_{e0}}} = {{\bar A}_{e0}}{{\bar X}_{e0}} + {{\bar B}_{1e0}}U + {{\bar B}_{2e0}}{\bar W}_{e0}  \hfill \\
  \overline Y  = {\overline C _{e0}}{{\bar X}_{e0}} + {\overline D _{1e0}}U \hfill \\ 
\end{gathered}  \right.
\end{equation}

where

\[\begin{gathered}
  {{\bar W}_{e0}} = {\left[ {\begin{array}{*{20}{c}}
  {{W^T}}&{{Y_{e0ref}}} 
\end{array}} \right]^T};{{\bar D}_{1e0}} = \left[ {\begin{array}{*{20}{c}}
  0&I \\ 
  0&0 
\end{array}} \right]; \hfill \\
  {{\bar A}_{e0}} = \left[ {\begin{array}{*{20}{c}}
  A&0&0 \\ 
  {{B_{{\omega _3}}}{C_Q}C}&{{A_{{\omega _3}}}}&0 \\ 
  { - C}&0&0 
\end{array}} \right];{{\bar B}_{1e0}} = \left[ {\begin{array}{*{20}{c}}
  {{B_1}} \\ 
  0 \\ 
  0 
\end{array}} \right]; \hfill \\
  {{\bar B}_{2e0}} = \left[ {\begin{array}{*{20}{c}}
  {{B_1}}&0 \\ 
  0&0 \\ 
  0&I 
\end{array}} \right];{C_{e0}} = \left[ {\begin{array}{*{20}{c}}
  { - C}&0&0 \\ 
  {{D_{{\omega _3}}}{C_Q}C}&{{C_{{\omega _3}}}}&0 
\end{array}} \right] \hfill \\ 
\end{gathered} \]

\section{Proposed H-infinity Controller}

The controller synthesis for very fast control loop is given in what follows.\\

Consider the extended linear system \eqref{eq:extended_very_fast_system} of very fast control loop with the state feedback control 
\begin{equation}\label{staticFeedback}
	U = {\bar K}_0 {\bar X}_{e0}.
\end{equation}

%\begin{equation}\label{eq:H_inf_linear_system}
%\left\{ \begin{array}{l}
%\dot {\bar X} = \bar A\bar X + {{\bar B}_1}\bar U + {{\bar B}_2}\bar W\\
%\bar Y = \bar C\bar X + {\bar D}_1\bar U + {{\bar D}_2}\bar W\\
%\bar U = \bar K\bar X
%\end{array} \right.
%\end{equation}

\subsection{Controller synthesis}

The influence of the disturbance ${\bar W}_{e0}$ to the output $\bar Y$ is

\begin{equation}\label{transfer}
\bar Y\left( s \right) = G\left( s \right){\bar W}_{e0}\left( s \right),
\end{equation}

where 

\begin{equation}\label{eq:G_s_of_H_inf_linear_system}
{\small
G\left( s \right) = ({{\bar C}_{e0}} + {{\bar D}_{1e0}}{{\bar K}_0}){\left( {sI - \left( {{{\bar A}_{e0}} + {{\bar B}_{1e0}}{{\bar K}_0}} \right)} \right)^{ - 1}}{{\bar B}_{2e0}}
}
\end{equation}

It is obvious that \cite{45}, \cite{41} - \cite{44}

\begin{equation}
{\left\| {\bar Y} \right\|_2} \le {\left\| G \right\|_\infty }{\left\| {{{\bar W}_{e0}}} \right\|_2}
\end{equation}

\begin{problem}\label{pb:1}

For the linear system \eqref{eq:extended_very_fast_system} and \eqref{transfer}, the H-infinity problem is to design a stabilizing static state feedback control law \eqref{staticFeedback} such that

\begin{equation}
{\left\| G \right\|_\infty } < \gamma 
\end{equation}

for a minimum positive scalar $\gamma$.

\end{problem}

\subsection{Derivation of stability and robustness conditions}

The main result for the global asymptotic stability of a globally H-infinity controller  are summarized by the following theorem.

\begin{theorem}\label{theo:1} 

The H-infinity problem has a solution if and only if there exist a matrix $W_P$ and a symmetric positive definite matrix $X_P$ , through the LMI-based minimization such that

\[
\mathop {minimize}\limits_{{X_P}, {W_{P}}} \;\;\gamma 
\]

subject to

\[\left[ {\begin{array}{*{20}{c}}
\Psi &{{{\bar B}_{2e0}}}&{{{\left( {{{\bar C}_{e0}}{X_P} + {{\bar D}_{1e0}}{W_P}} \right)}^T}}\\
{\bar B_{2e0}^T}&{ - \gamma I}&0\\
{{{\bar C}_{e0}}{X_P} + {{\bar D}_{1e0}}{W_P}}&0&{ - \gamma I}
\end{array}} \right] < 0\]

where

\begin{equation}
\Psi  = {\left( {{{\bar A}_{e0}}{X_P} + {{\bar B}_{1e0}}{W_P}} \right)^T} + {{\bar A}_{e0}}{X_P} + {{\bar B}_{1e0}}{W_P}
\end{equation}

In this case, the solution of the problem is

\begin{equation}
{{\bar K}_0} = {W_P}X_P^{ - 1}
\end{equation}
\end{theorem}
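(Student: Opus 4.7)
The plan is to derive the LMI by starting from the Bounded Real Lemma applied to the closed-loop system and then performing a congruence transformation together with a standard change of variables to linearize the resulting bilinear matrix inequality.

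First, I would form the closed-loop realization of $G(s)$ in \eqref{eq:G_s_of_H_inf_linear_system}, namely $A_{cl}=\bar A_{e0}+\bar B_{1e0}\bar K_0$ and $C_{cl}=\bar C_{e0}+\bar D_{1e0}\bar K_0$, with input matrix $\bar B_{2e0}$ and zero direct feed-through from $\bar W_{e0}$ to $\bar Y$. By the Bounded Real Lemma, $A_{cl}$ is Hurwitz and $\|G\|_\infty<\gamma$ if and only if there exists a symmetric positive definite matrix $P$ satisfying
\begin{equation*}
\begin{bmatrix}
A_{cl}^TP+PA_{cl} & P\bar B_{2e0} & C_{cl}^T \\
\bar B_{2e0}^TP & -\gamma I & 0 \\
C_{cl} & 0 & -\gamma I
\end{bmatrix}<0.
\end{equation*}
This is the standard necessary and sufficient condition for the $H_\infty$ norm of a strictly proper stable system to be below $\gamma$, and invoking it is how I would establish both the "if" and the "only if" directions simultaneously.

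Next, I would linearize this inequality, which is bilinear because of the products $PA_{cl}=PA+P\bar B_{1e0}\bar K_0$. I would introduce the change of variables $X_P:=P^{-1}>0$ and $W_P:=\bar K_0 X_P$, and then perform a congruence transformation by $\mathrm{diag}(X_P,I,I)$ (which preserves the sign of the inequality since $X_P>0$). The $(1,1)$ block becomes $X_P A_{cl}^T+A_{cl}X_P=(\bar A_{e0}X_P+\bar B_{1e0}W_P)^T+(\bar A_{e0}X_P+\bar B_{1e0}W_P)=\Psi$, the $(1,2)$ block becomes $\bar B_{2e0}$ unchanged, and the $(3,1)$ block becomes $C_{cl}X_P=\bar C_{e0}X_P+\bar D_{1e0}W_P$. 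This produces exactly the LMI stated in the theorem, and the controller is recovered by inverting the change of variables, $\bar K_0=W_P X_P^{-1}$.

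Finally, minimizing $\gamma$ subject to this LMI together with $X_P>0$ is a standard convex semidefinite program, so feasibility characterizes exactly those $\gamma$ for which Problem \ref{pb:1} admits a solution, and the infimum yields the best achievable bound. The main conceptual obstacle is the bilinearity of the Bounded Real Lemma formulation in $(P,\bar K_0)$; once the change of variables $W_P=\bar K_0 P^{-1}$ is identified, the rest is a routine Schur-type rearrangement. A minor technical point to check is that $X_P$ being symmetric positive definite ensures invertibility so that the recovered gain $\bar K_0=W_P X_P^{-1}$ is well defined and stabilizing, which follows from the fact that the $(1,1)$ block $\Psi<0$ in the LMI already encodes Hurwitz stability of $\bar A_{e0}+\bar B_{1e0}\bar K_0$ through a Lyapunov inequality.
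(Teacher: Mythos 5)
Your proposal is correct and follows essentially the same route as the paper: apply the Bounded Real Lemma to the closed-loop realization $(\bar A_{e0}+\bar B_{1e0}\bar K_0,\ \bar B_{2e0},\ \bar C_{e0}+\bar D_{1e0}\bar K_0,\ 0)$ and linearize via the change of variables $W_P=\bar K_0 X_P$. The only cosmetic difference is that you derive the $X_P$-form of the lemma explicitly through the congruence transformation $\mathrm{diag}(X_P,I,I)$ with $X_P=P^{-1}$, whereas the paper quotes that form directly as known; this makes your write-up slightly more self-contained but does not change the argument.
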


\begin{proof}

It is well-known \cite{41} - \cite{44} that for a linear system given by

\begin{equation}
\left\{ \begin{array}{l}
\dot X = AX + BU\\
Y = CX + DU
\end{array} \right.
\end{equation}

or by

\begin{equation}
\left\{ \begin{array}{l}
Y = G\left( s \right)U\\
G\left( s \right) = C{\left( {sI - A} \right)^{ - 1}}B + D
\end{array} \right.
\end{equation}

a solution to the H-infinity problem can be found if and only if there exist a symmetric positive definite matrix $X_p$ , such that

\begin{equation}
\left[ {\begin{array}{*{20}{c}}
{{X_P}{A^T} + A{X_P}}&B&{{X_P}{C^T}}\\
{{B^T}}&{ - \gamma I}&{{D^T}}\\
{C{X_P}}&D&{ - \gamma I}
\end{array}} \right] < 0
\end{equation}

Written for the linear system with disturbance (equation (\ref{eq:extended_very_fast_system})), where $G\left( s \right)$ is in the form of equation (\ref{eq:G_s_of_H_inf_linear_system}), the latter condition yields 

\begin{equation}
\left\{ \begin{array}{l}
{\small
\left[ {\begin{array}{*{20}{c}}
\Psi &{{{\bar B}_{2e0}}}&{{X_P}{{\left( {{{\bar C}_{e0}} + {{\bar D}_{1e0}}{{\bar K}_0}} \right)}^T}}\\
{\bar B_{2e0}^T}&{ - \gamma I}&0\\
{\left( {{{\bar C}_{e0}} + {{\bar D}_{1e0}}{{\bar K}_0}} \right){X_P}}&0&{ - \gamma I}
\end{array}} \right] < 0
}\\
\Psi  = {X_P}{\left( {{{\bar A}_{e0}} + {{\bar B}_{1e0}}{{\bar K}_0}} \right)^T} + \left( {{{\bar A}_{e0}} + {{\bar B}_{1e0}}{{\bar K}_0}} \right){X_P}
\end{array} \right.
\end{equation}

This is equivalent to

\begin{equation}
\left\{ \begin{array}{l}
{\scriptsize
\left[ {\begin{array}{*{20}{c}}
\Psi &{{{\bar B}_{2e0}}}&{{{\left( {{{\bar C}_{e0}}{X_P} + {{\bar D}_{1e0}}{{\bar K}_0}{X_P}} \right)}^T}}\\
{\bar B_{2e0}^T}&{ - \gamma I}&0\\
{{{\bar C}_{e0}}{X_P} + {{\bar D}_{1e0}}{{\bar K}_0}{X_P}}&0&{ - \gamma I}
\end{array}} \right] < 0
}\\
{\footnotesize
\Psi  = {\left( {{{\bar A}_{e0}}{X_P} + {{\bar B}_{1e0}}{{\bar K}_0}{X_P}} \right)^T} + \left( {{{\bar A}_{e0}}{X_P} + {{\bar B}_{1e0}}{{\bar K}_0}{X_P}} \right)
}
\end{array} \right.
\end{equation}

and, by defining ${W_P} = {{\bar K}_0}{X_P}$, the inequality turns into the one of Theorem 1.

\end{proof}

\subsection{Implementation of the controller}

The state-feedback of the above extended system results into a multivariable PI controller
\begin{equation}\label{eq:extended_system_controller}
U = {{\bar K}_0}{{\bar X}_{e0}} = {{\bar K}_0}\left[ {\begin{array}{*{20}{c}}
{{X_{e0}}}\\
{{X_{{\omega _3}}}}\\
{\int {{e_{e0}}dt} }
\end{array}} \right]
\end{equation} 

This is the control structure of Plant 0 controller in Fig. \ref{fig:global_frequency_support}. 

\section{Simulations and results}

For the system in Fig. \ref{fig:PMSGs_and_grid}, simulations are carried out with a full model (detailed models from the standard Matlab libraries for all elements of the systems (generators, lines, converters, loads, ...)) developped in SimPower Toolbox of Matlab. The PMSGs are identical and of 8MW nominal power. The load of the whole power system is $P_L =55000MW$. A limited but visible support is thus expected from PMSGs to grid frequency. The initial operating point for all test scenarios is at generated power $P_e =0.6527pu$ for each PMSG.

To check the PMSGs' control specifications and objectives listed in Sections \ref{section_SystemObjectives} and \ref{section:time_decoupling_and_fully_coordinated_control_strategy}, several scenarios have been constructed:

\begin{itemize}
	\item Scenario 1 - Local services: to test active power, reactive power and DC voltage control, steps are applied on control reference variables. 
	\item Scenario 2 - Voltage services (Q/V): to test voltage services, a three-phase metallic short-circuit is performed in the middle of the AC line $X_1$ in Fig. \ref{fig:PMSGs_and_grid}. The short-circuit starts at t = 100s and it is cleared at t = 100.11s. 
	\item Scenario 3 - Frequency services: to test frequency services, the grid load is step increased by 10Mw at t = 100s and restored at t = 800s. 
	\item Scenario 4 - Comparison between the proposed coordinated control and classic vector control: for that purpose, the classic vector control presented in Appendix \ref{subsectionVectorControl} have been implemented and used. The same short-circuit as in Scenario 2 was considered. 
\end{itemize}	

\subsection{Scenario 1: Local services}        

Fig. \ref{fig:Sim_1_PMSG1_Q1} gives the reactive power response of PMSG1 to steps on $Q_{1_{ref}}$ in both directions (negative and, next, positive). One can notice nominal response times (around 4s for 5$\%$ step tracking) and non oscillatory transient dynamics. Fig. \ref{fig:Sim_1_PMSG1_P1} contains same kind of nominal responses to similar steps on $P_{1_{ref}}$. In Fig. \ref{fig:Sim_1_PMSG1_v_DC1} it is shown the response of the DC voltage control to a change of reference $V_{DC_{1_{ref}}}$ from 2.5pu to 2.625pu. The time response is around 5s and the transient is smooth. The behavior of the second PMSG for local services is similar (Figs. \ref{fig:Sim_1_PMSG2_Q2}, \ref{fig:Sim_1_PMSG2_P2}, \ref{fig:Sim_1_PMSG2_v_DC2}). 

\subsection{Scenario 2: Voltage services (Q/V)}\label{subsectionSecenario2}

Short-circuit responses of PMSG1 are given in Figs. \ref{fig:Sim_2_PMSG1_Q1}, \ref{fig:Sim_2_PMSG1_V_r1} and \ref{fig:Sim_2_PMSG1_w1}. They show a good fault ride through capability, both for voltage and reactive power dynamics. The duration of the transient at the beginning and at the end of short-circuit event is very short, around 4ms and, after, the terminal voltage is stabilized at its desired value. Also, thanks to coordinated control, the impact of the speed of the generator is minimal. Indeed, the generator speed (Fig. \ref{fig:Sim_2_PMSG1_w1}) only experiences a very small deviation in a short time and without any overshoot/undershoot after the clearing of the short-circuit. 

Notice also that, due to the location of the short-circuit (on line $X_1$), its impact on PMSG1 is stronger than on PMSG2. Indeed, one can see in Fig. \ref{fig:Sim_2_PMSG1_V_r1} that the terminal voltage of PMSG1 drops almost to zero. For PMSG2, the effect of short-circuit is insignificant as seen in Fig. \ref{fig:Sim_2_PMSG2_Q2} to Fig. \ref{fig:Sim_2_PMSG2_w2}. In Fig. \ref{fig:Sim_2_PMSG2_Q2}, the reactive power is slightly deviated, around 4$\%$ of its nominal value during the event, and only oscillates around 4ms before coming back to its operating point. In Fig. \ref{fig:Sim_2_PMSG2_V_r2}, the terminal voltage of PMSG2 is maintained at its nominal value, with only a short transient of about 10ms at the beginning and at the end of the event. Moreover, these transients come with a relatively low deviation peak, at only around 4$\%$. In Fig. \ref{fig:Sim_2_PMSG2_w2}, the influence of short-circuit on generator speed is infinitesimal, with only a very small deviation for a very short duration.

\subsection{Scenario 3: Frequency services}

Figs. \ref{fig:Sim_3_PMSG2_P2} and \ref{fig:Sim_3_PMSG1_P1} show very fast increase of the active powers of both PMSGs in response to the simulated load variation of the system given in Fig. \ref{fig:Sim_3_P_L}: they are increased immediately, from deloaded point (0.65pu) to nearly maximum value (0.8pu). This proves effectiveness of the very fast control loop  controller (through inertia control) of Fig. \ref{fig:global_frequency_support}. Indeed, without inertia control, no variation is registered (black curves in the same figures). This frequency support helps to diminish the severe drop of RoCoF as observed in Fig. \ref{fig:Sim_3_RoCoF}. The generator speeds, given in Fig. \ref{fig:Sim_3_PMSG2_w2} and Fig. \ref{fig:Sim_3_PMSG1_w1}, decrease and come close to their optimal values (blue curves/w1\_optimal), which help to maintain the produced power. However, this immediate increase in power can only be maintained during a short laps of time (a couple seconds). After that, the active power of each PMSG tends to come back close to nominal value to ensure machines' stability. 

After a couple of seconds, the fast loop of the proposed control (Fig. \ref{fig:global_frequency_support}) is activated, through MPPT and droop control, to request more power in order to reduce grid frequency deviation (Fig. \ref{fig:Sim_3_wf}). Both generators use their deloding reserves (10$\%$) to increase their powers from deloaded points (0.65pu) to their maximum values of MPPT (0.8pu) as seen in Fig. \ref{fig:Sim_3_PMSG2_P2} and, respectively, Fig. \ref{fig:Sim_3_PMSG1_P1}. As a result, generator speeds and pitch angles come close to optimal values as can be seen in Fig. \ref{fig:Sim_3_PMSG1_w1}, Fig. \ref{fig:Sim_3_PMSG2_w2} and, respectively, Fig. \ref{fig:Sim_3_PMSG1_pitch_angle}, Fig. \ref{fig:Sim_3_PMSG2_pitch_angle} (optimal value for pitch angle is 0 deg). At t = 800s, as the grid load comes back to its nominal value, the active powers for both PMSGs are brought to their nominal values by the very fast and fast controllers of each machine (Fig. \ref{fig:global_frequency_support}). This is the same for grid frequency, generator speeds and pitch angles, as observed in Figs. \ref{fig:Sim_3_wf}, \ref{fig:Sim_3_PMSG1_w1}, \ref{fig:Sim_3_PMSG2_w2}, \ref{fig:Sim_3_PMSG1_pitch_angle}, \ref{fig:Sim_3_PMSG2_pitch_angle}.

\subsection{Scenario 4: Comparison between proposed coordinated control and classic vector control}

One can see in Fig. \ref{fig:Sim_4_PMSG1_v_DC1} and Fig. \ref{fig:Sim_4_PMSG2_v_DC2} that the proposed control provides a better dynamics for $V_{DC}$ than classic vector control. Indeed, transient is less oscillating, which results in shorter response time (around 5 times). This is also the case for the terminal voltage where more sustained oscillations are registered with the vector control (Fig. \ref{fig:Sim_4_PMSG1_V_r1}). 

Same behavior is noticed for PMSG2 (Fig. \ref{fig:Sim_4_PMSG2_V_r2} and \ref{fig:Sim_4_PMSG2_v_DC2}), even though the short-circuit less impacts this machine (as explained in Section \ref{subsectionSecenario2}).

%Simulation 1: Local services for PMSGs

\begin{figure}[H]
    \centering
    \includegraphics[scale=0.14]{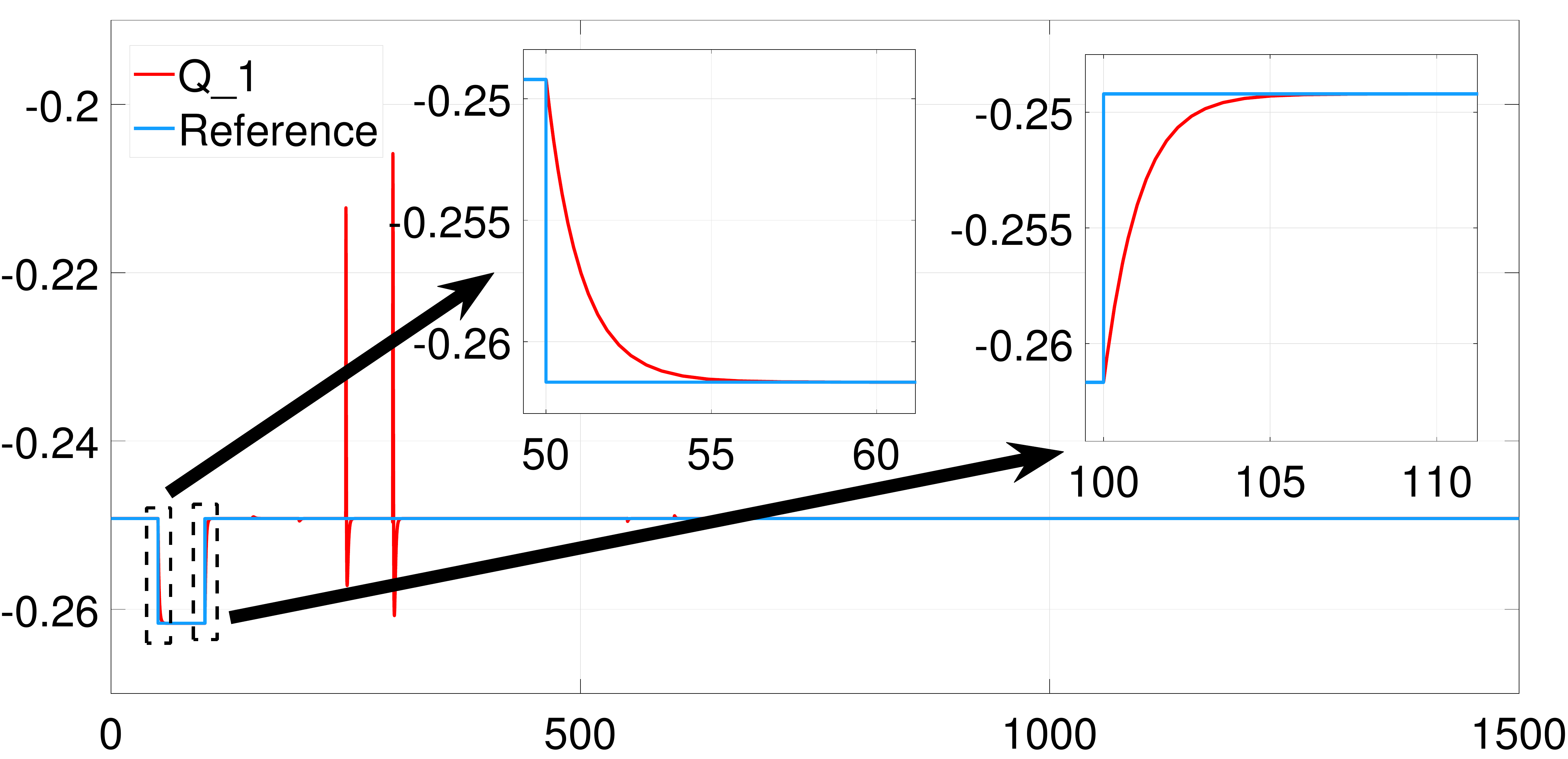}
    \caption{The reactive power (pu) of PMSG1 (local services)}
    \label{fig:Sim_1_PMSG1_Q1}
\end{figure}

\begin{figure}[H]
    \centering
    \includegraphics[scale=0.14]{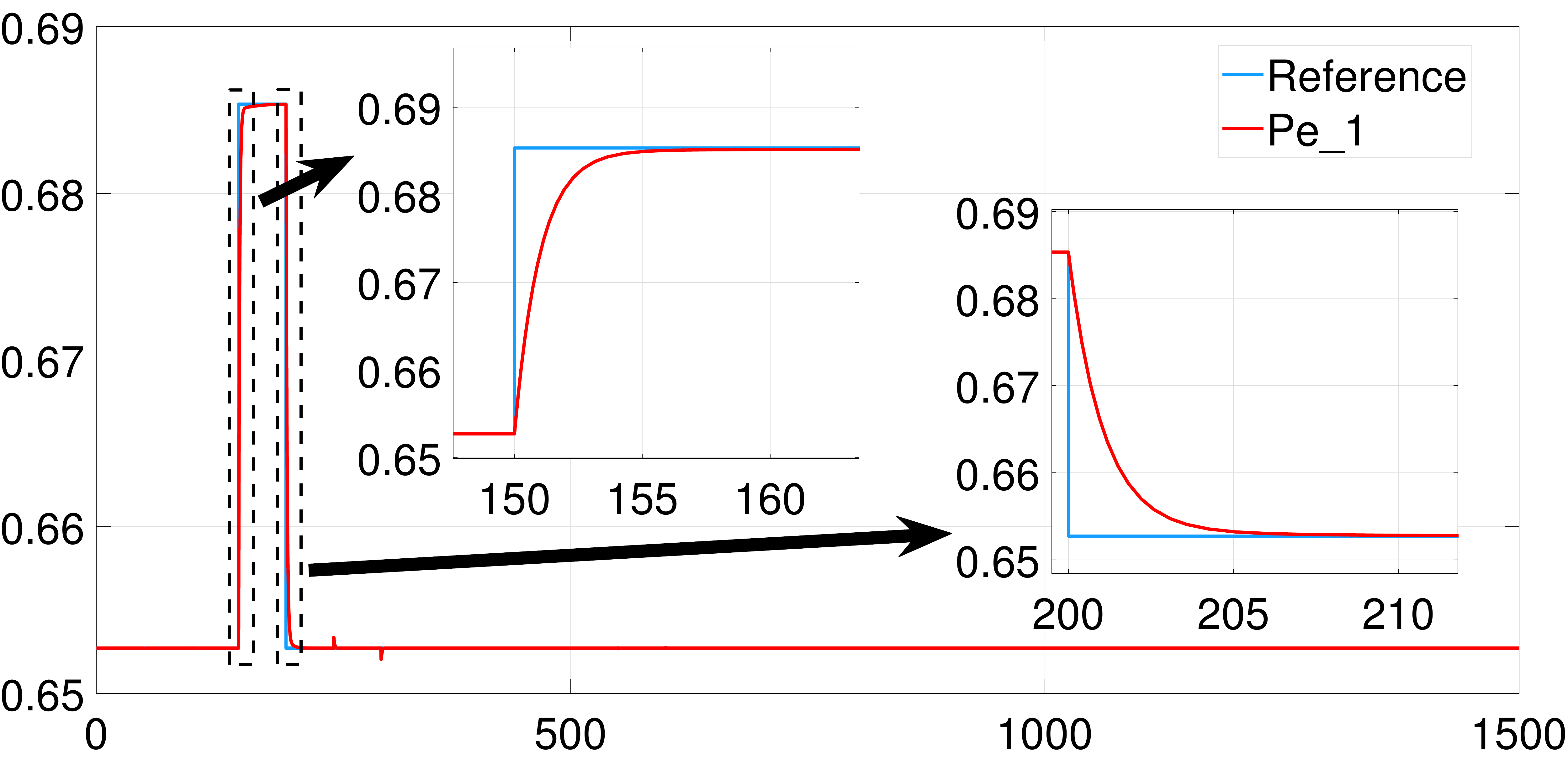}
    \caption{The active power (pu) of PMSG1 (local services)}
    \label{fig:Sim_1_PMSG1_P1}
\end{figure}

\begin{figure}[H]
    \centering
    \includegraphics[scale=0.14]{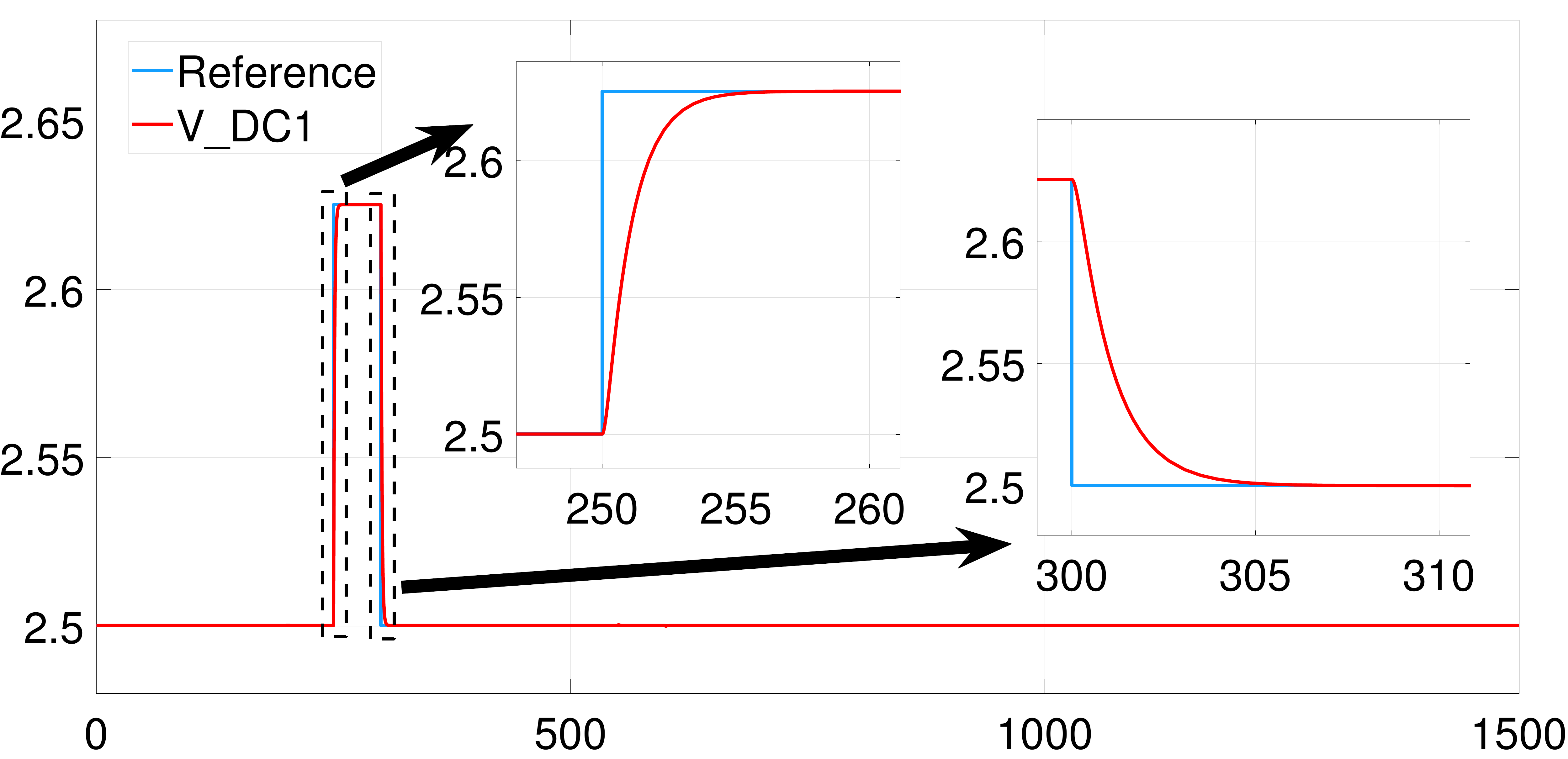}
    \caption{The DC voltage (pu) of PMSG1 (local services)}
    \label{fig:Sim_1_PMSG1_v_DC1}
\end{figure}

\begin{figure}[H]
    \centering
    \includegraphics[scale=0.14]{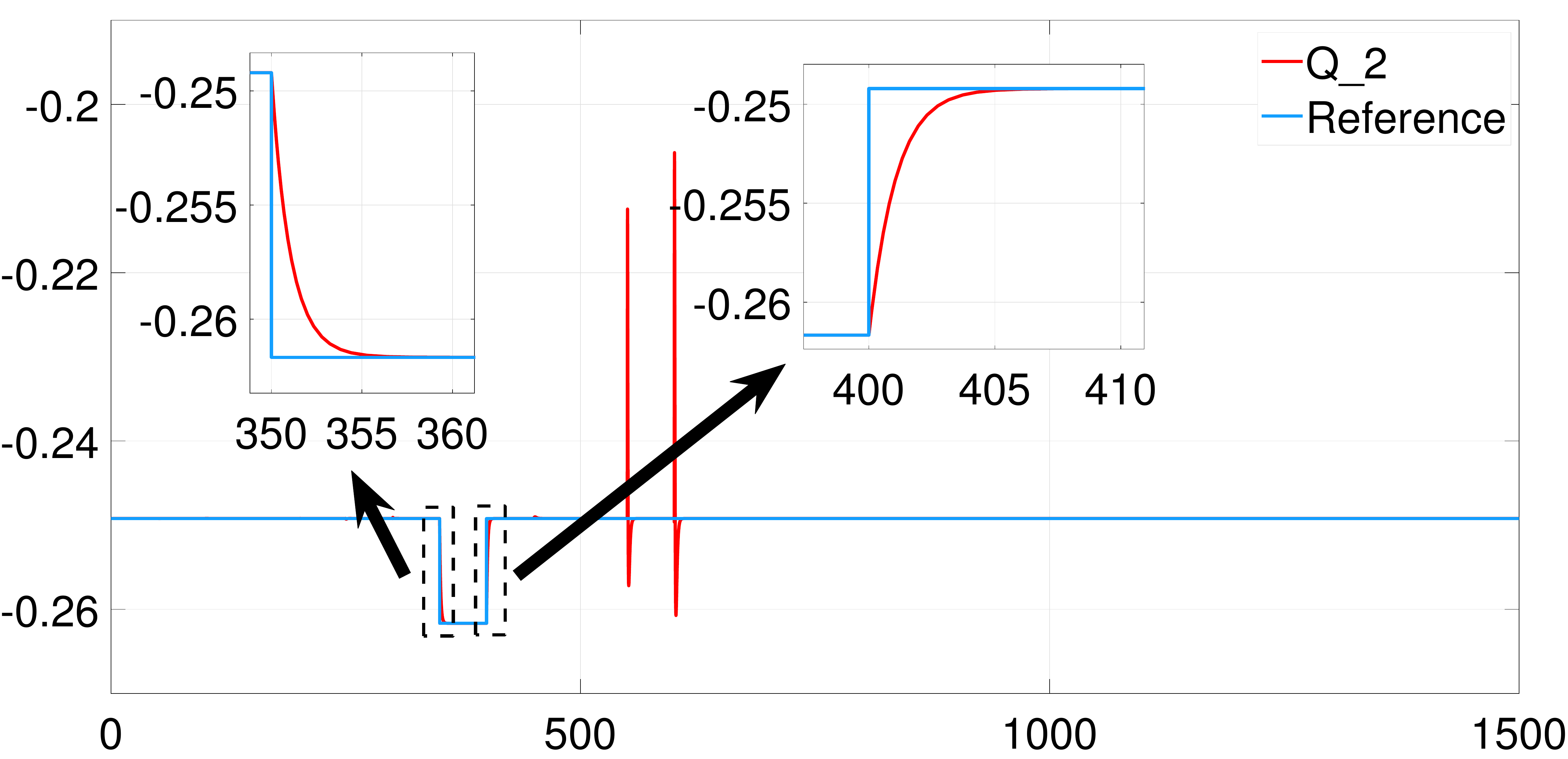}
    \caption{The reactive power (pu) of PMSG2 (local services)}
    \label{fig:Sim_1_PMSG2_Q2}
\end{figure}

\begin{figure}[H]
    \centering
    \includegraphics[scale=0.14]{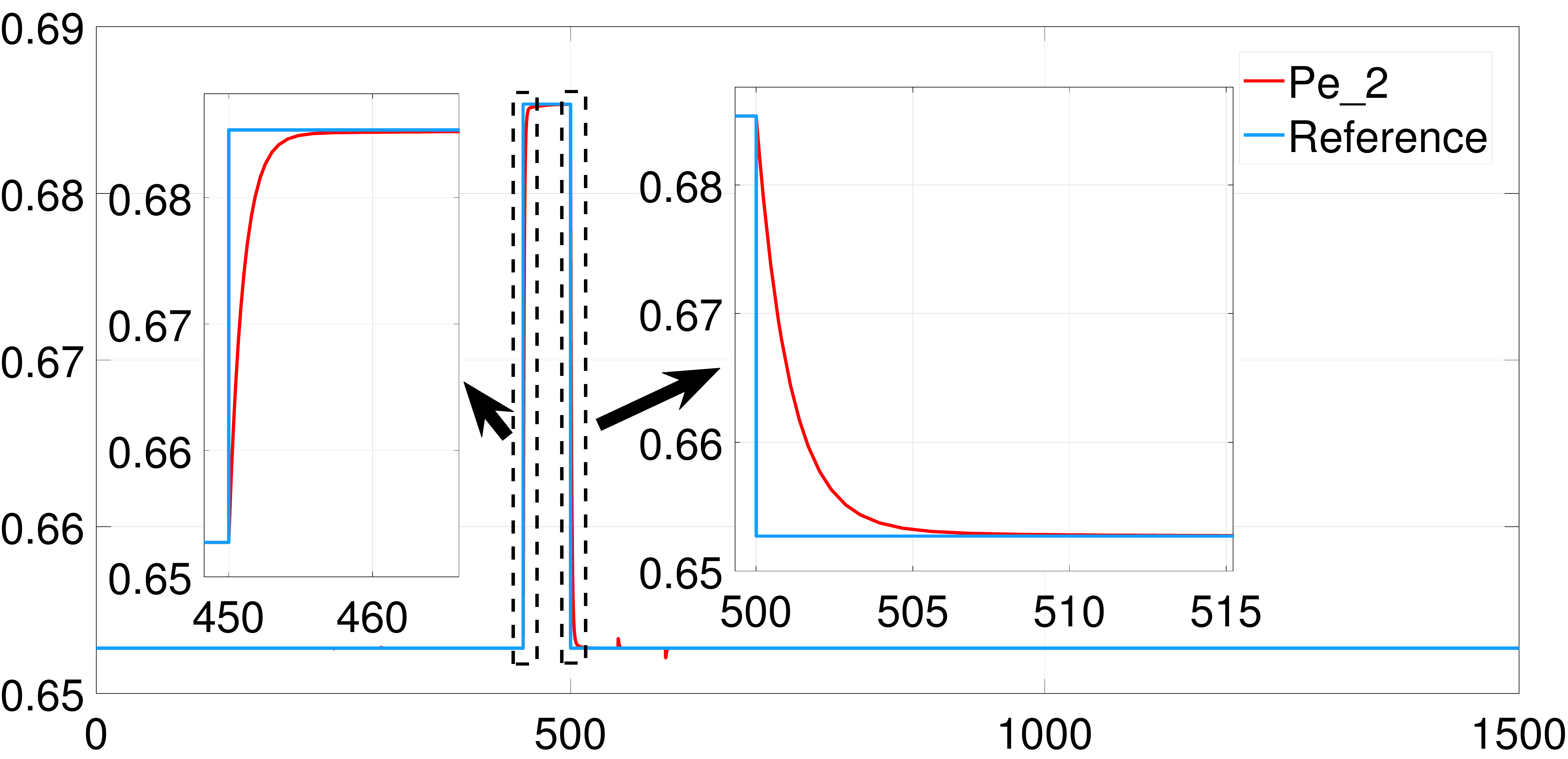}
    \caption{The active power (pu) of PMSG2 (local services)}
    \label{fig:Sim_1_PMSG2_P2}
\end{figure}

\begin{figure}[H]
    \centering
    \includegraphics[scale=0.14]{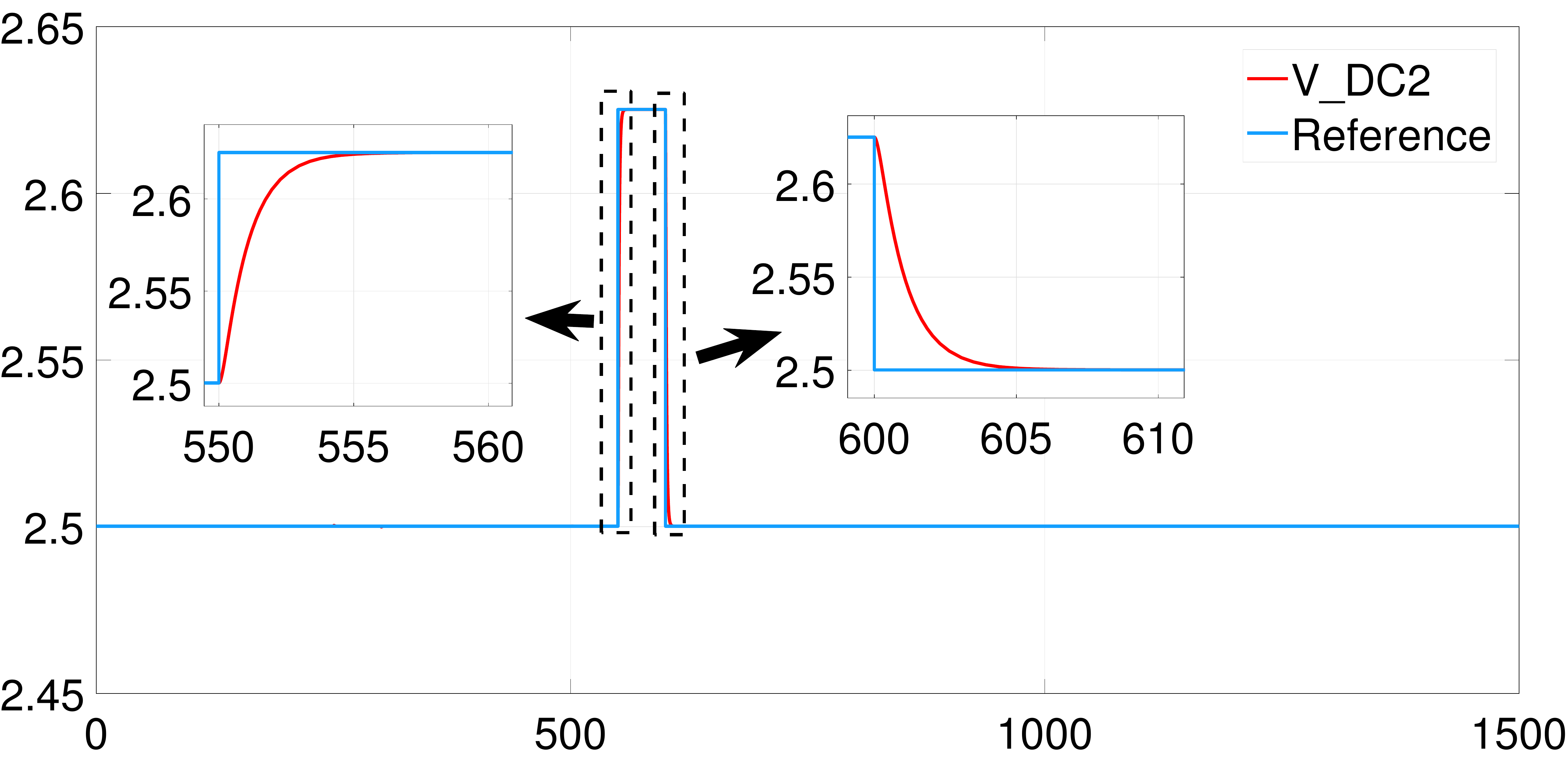}
    \caption{The DC voltage (pu) of PMSG2 (local services)}
    \label{fig:Sim_1_PMSG2_v_DC2}
\end{figure}

%Simulation 2: Voltage services (Q/V) for PMSGs using coordinated control

\begin{figure}[H]
    \centering
    \includegraphics[scale=0.14]{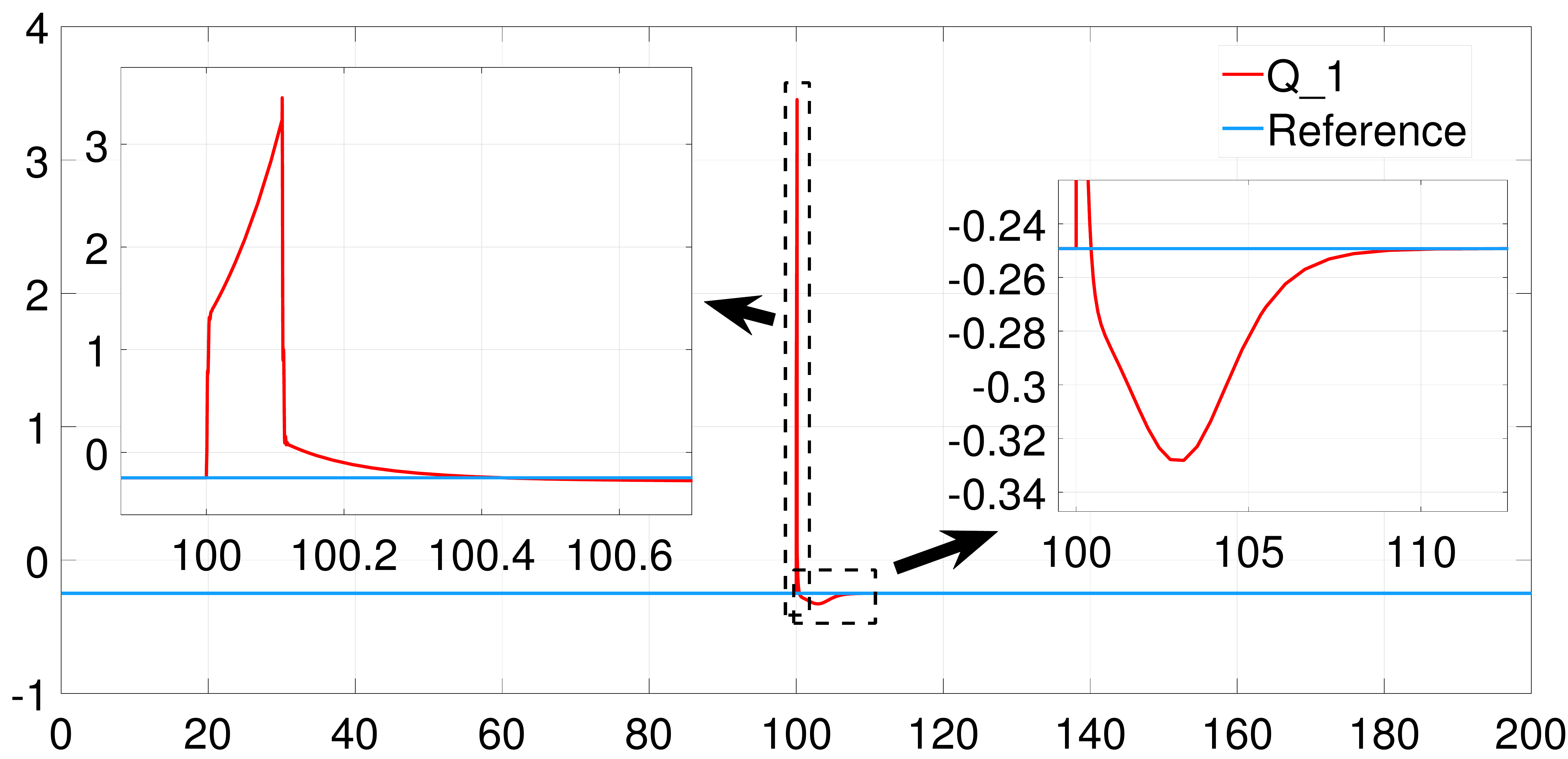}
    \caption{The reactive power (pu) of PMSG1 (Voltage service)}
    \label{fig:Sim_2_PMSG1_Q1}
\end{figure}

\begin{figure}[H]
    \centering
    \includegraphics[scale=0.14]{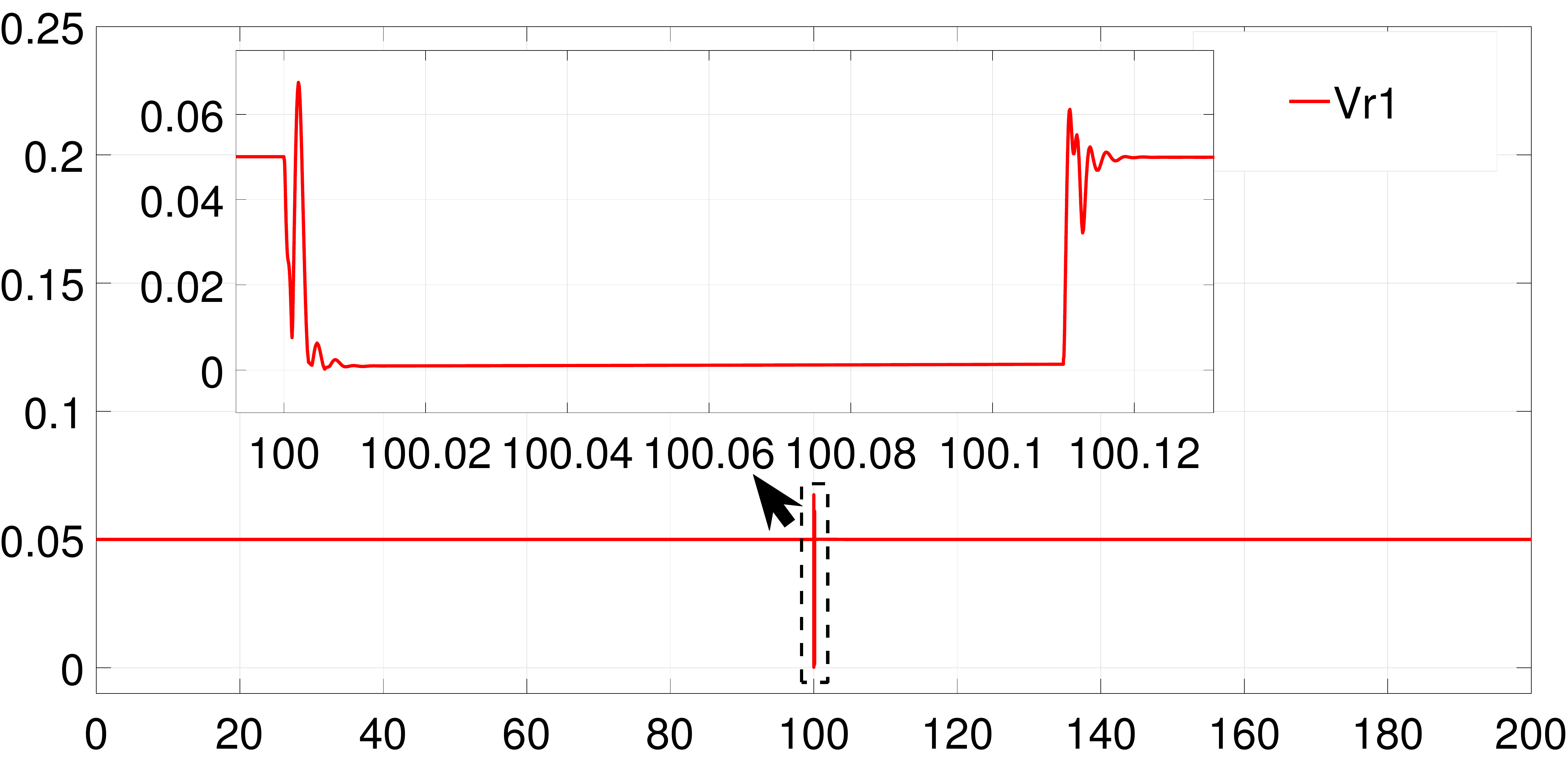}
    \caption{The AC terminal voltage $V_{r1}$ (pu) of PMSG1 (Voltage service)}
    \label{fig:Sim_2_PMSG1_V_r1}
\end{figure}

\begin{figure}[H]
    \centering
    \includegraphics[scale=0.14]{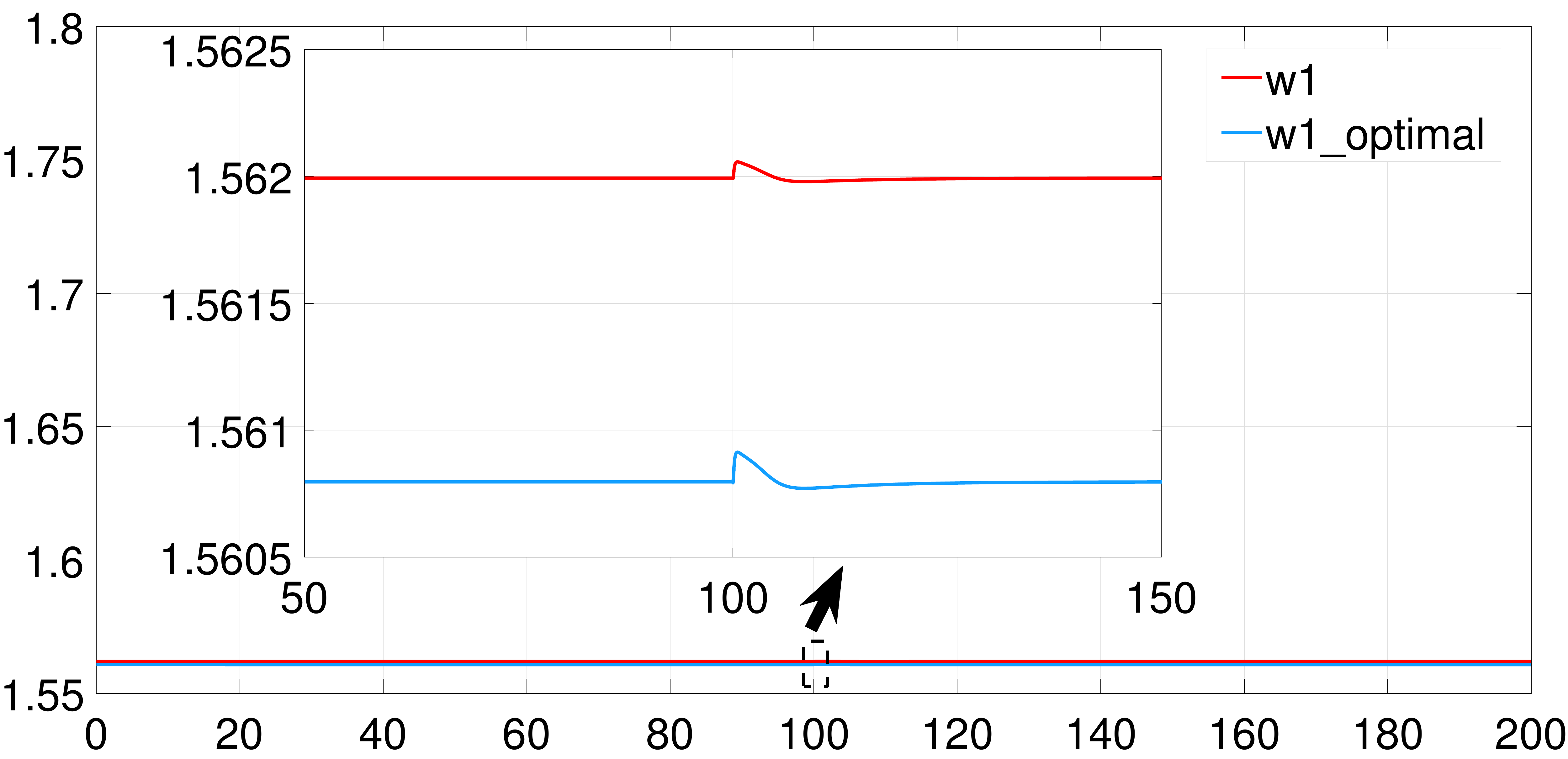}
    \caption{The generator speed (pu) of PMSG1 (Voltage service)}
    \label{fig:Sim_2_PMSG1_w1}
\end{figure}

\begin{figure}[H]
    \centering
    \includegraphics[scale=0.14]{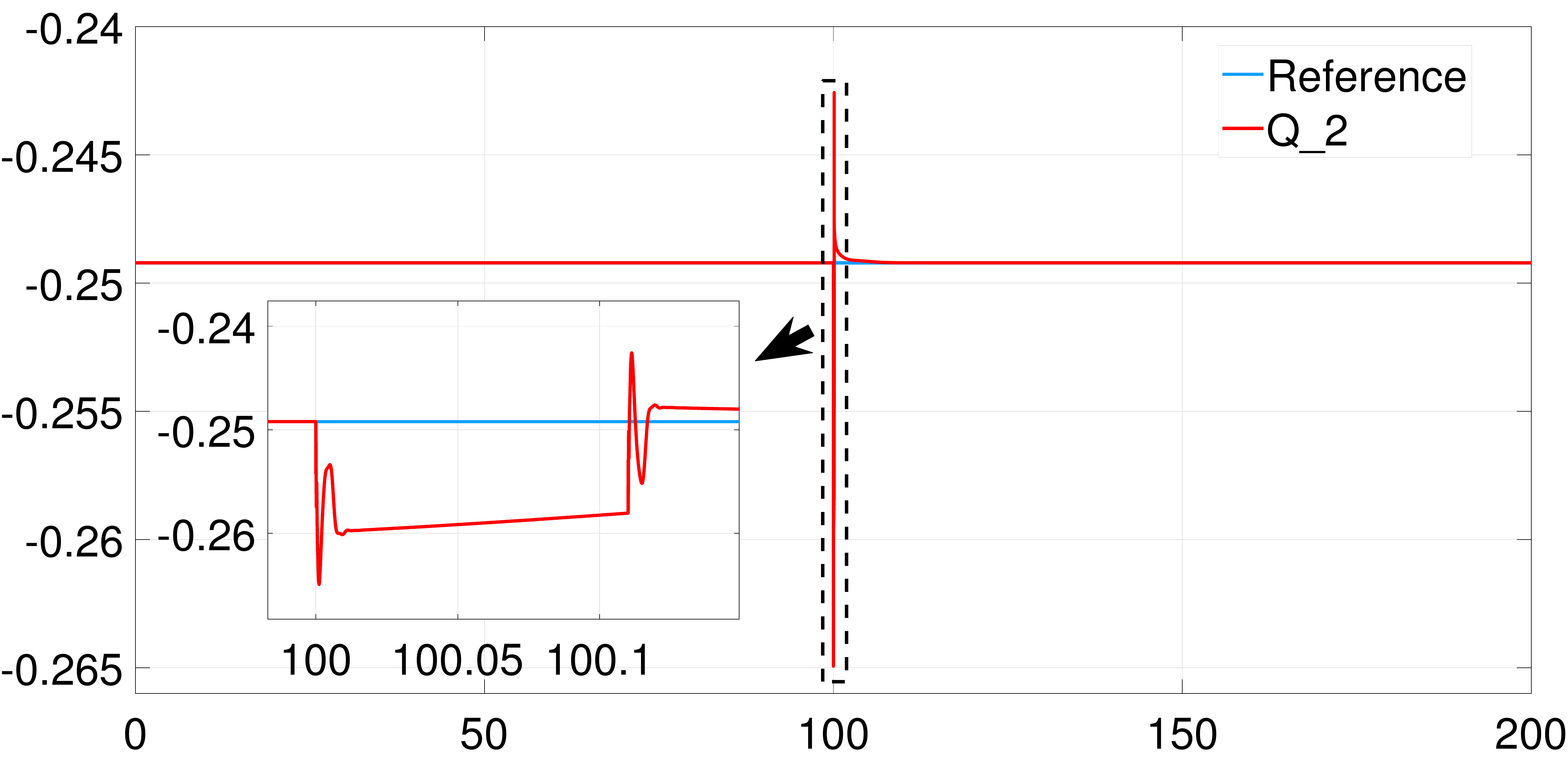}
    \caption{The reactive power (pu) of PMSG2 (Voltage service)}
    \label{fig:Sim_2_PMSG2_Q2}
\end{figure}

\begin{figure}[H]
    \centering
    \includegraphics[scale=0.14]{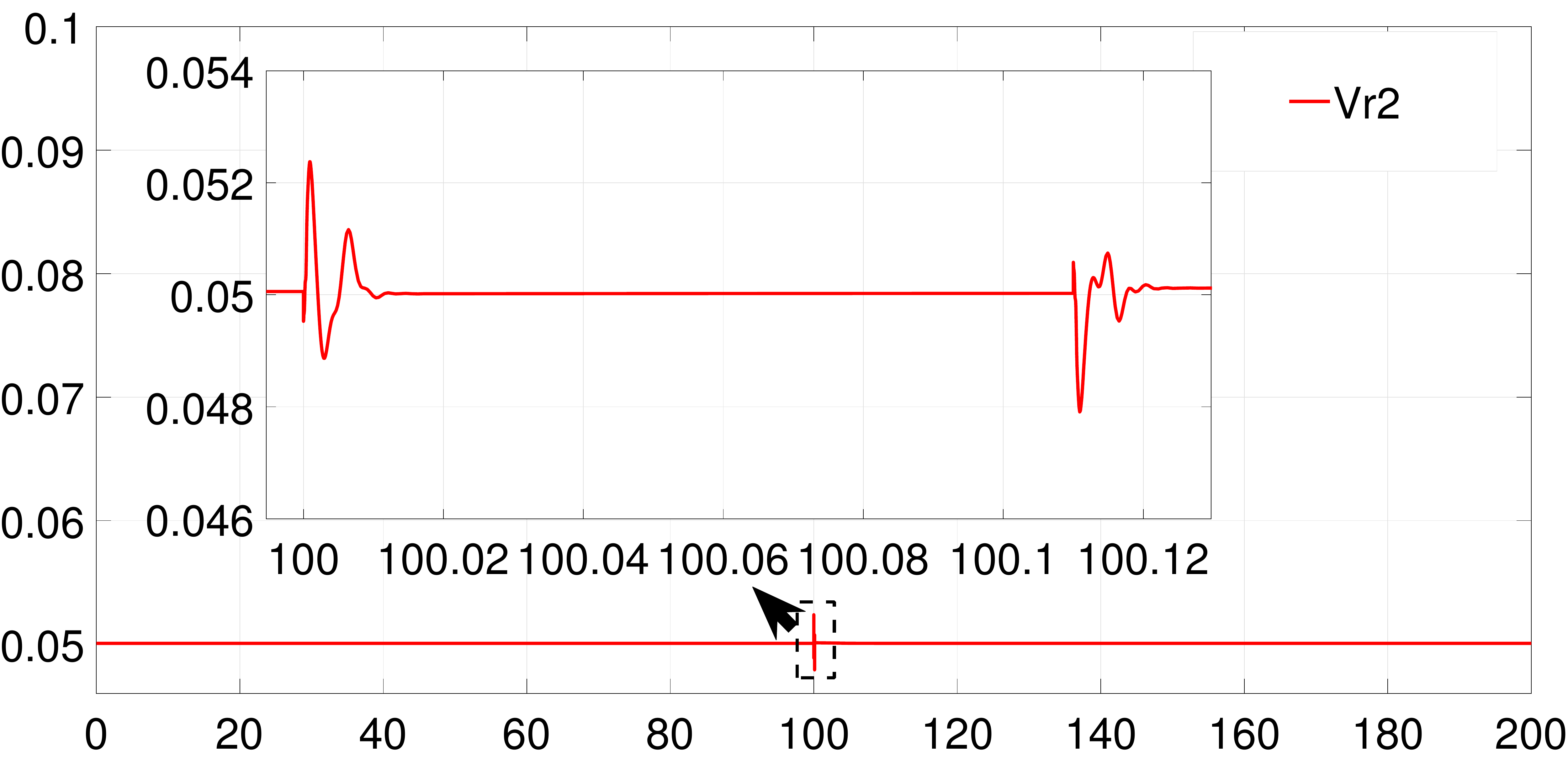}
    \caption{The AC terminal voltage $V_{r2}$ (pu) of PMSG2 (Voltage service)}
    \label{fig:Sim_2_PMSG2_V_r2}
\end{figure}

\begin{figure}[H]
    \centering
    \includegraphics[scale=0.14]{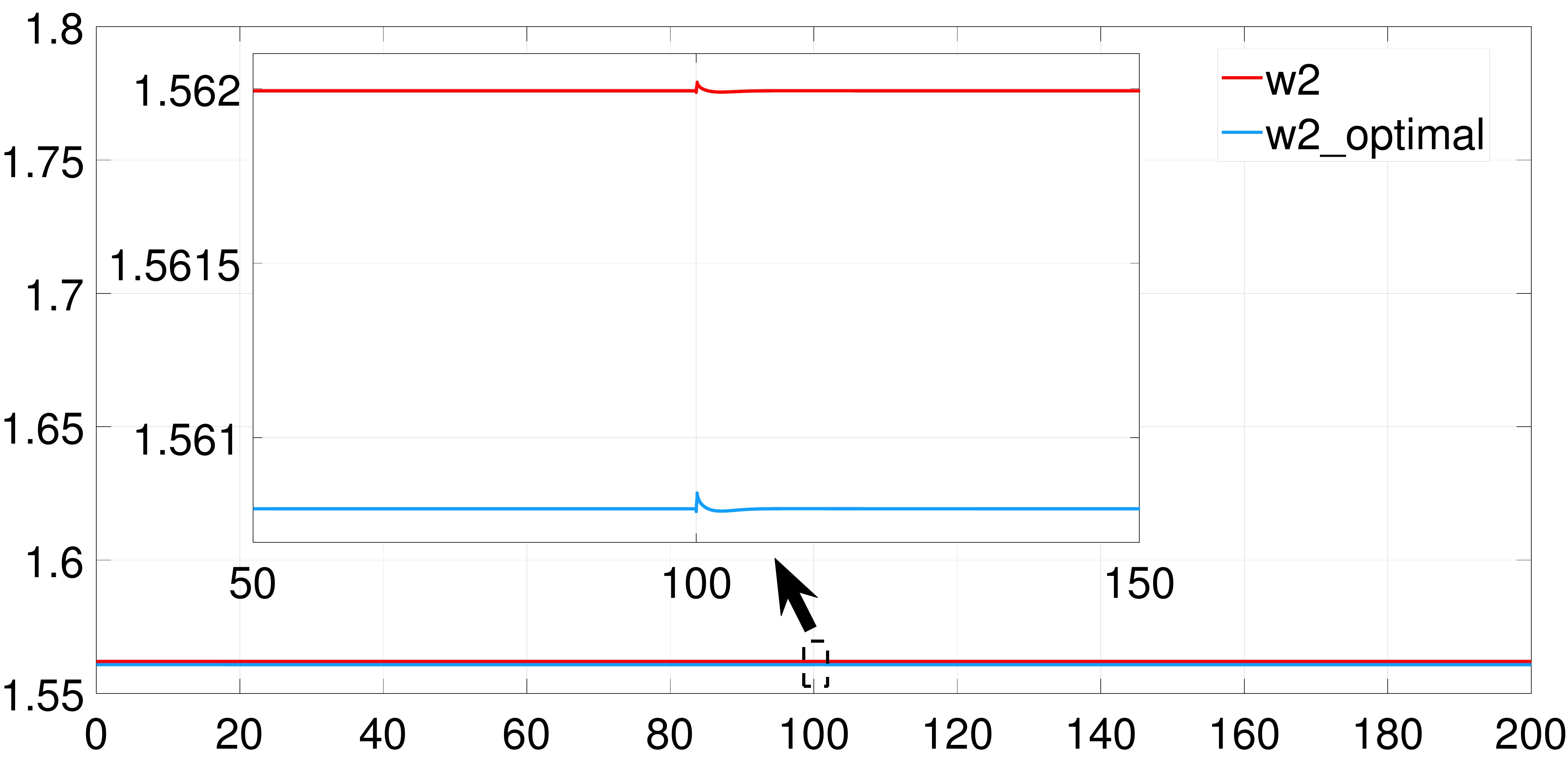}
    \caption{The generator speed (pu) of PMSG2 (Voltage service)}
    \label{fig:Sim_2_PMSG2_w2}
\end{figure}

%Simulation 3: Frequency services for PMSGs

\begin{figure}[H]
    \centering
    \includegraphics[scale=0.14]{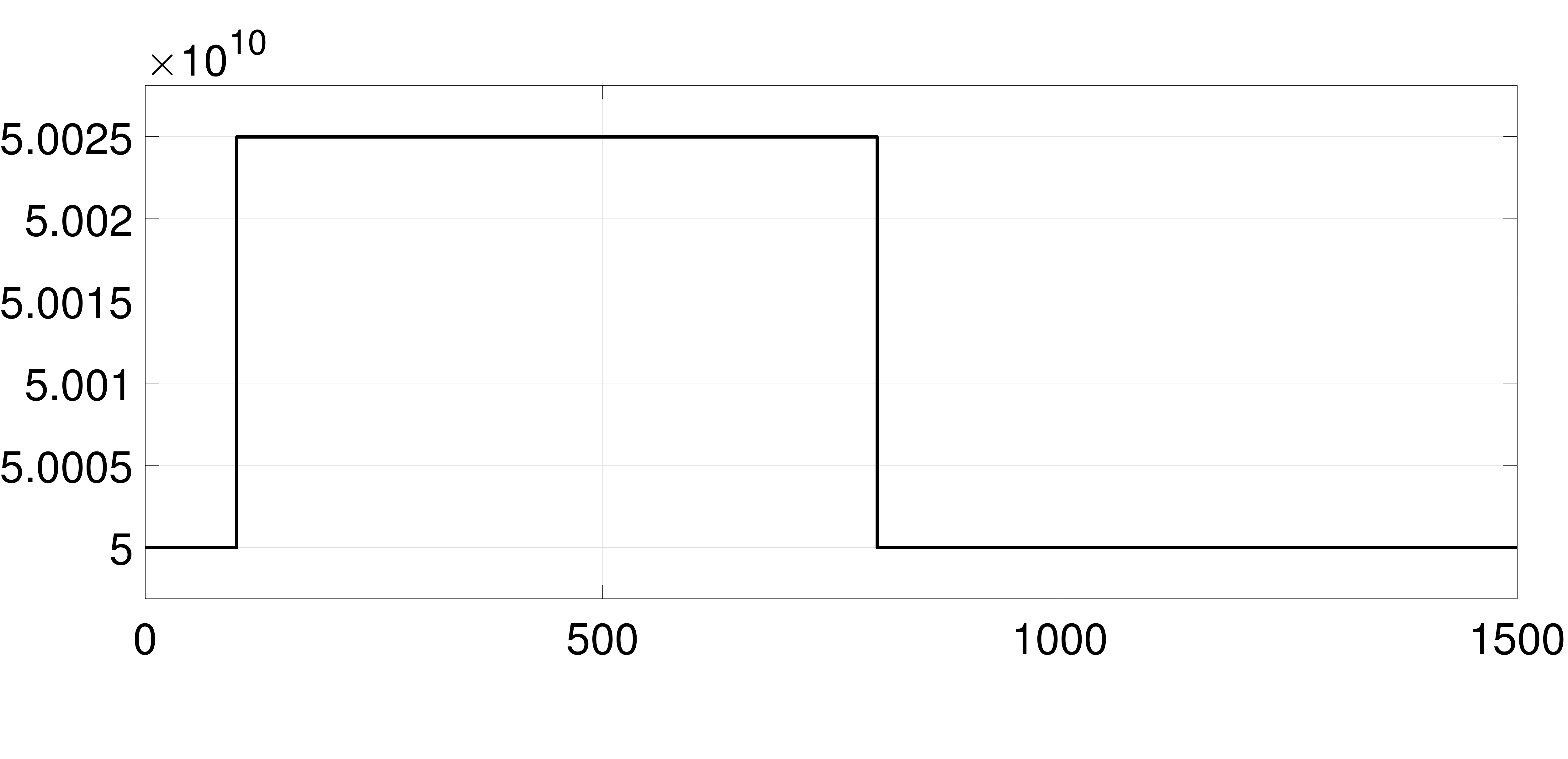}
    \caption{The grid load changing scenario (frequency service)}
    \label{fig:Sim_3_P_L}
\end{figure}

\begin{figure}[H]
    \centering
    \includegraphics[scale=0.14]{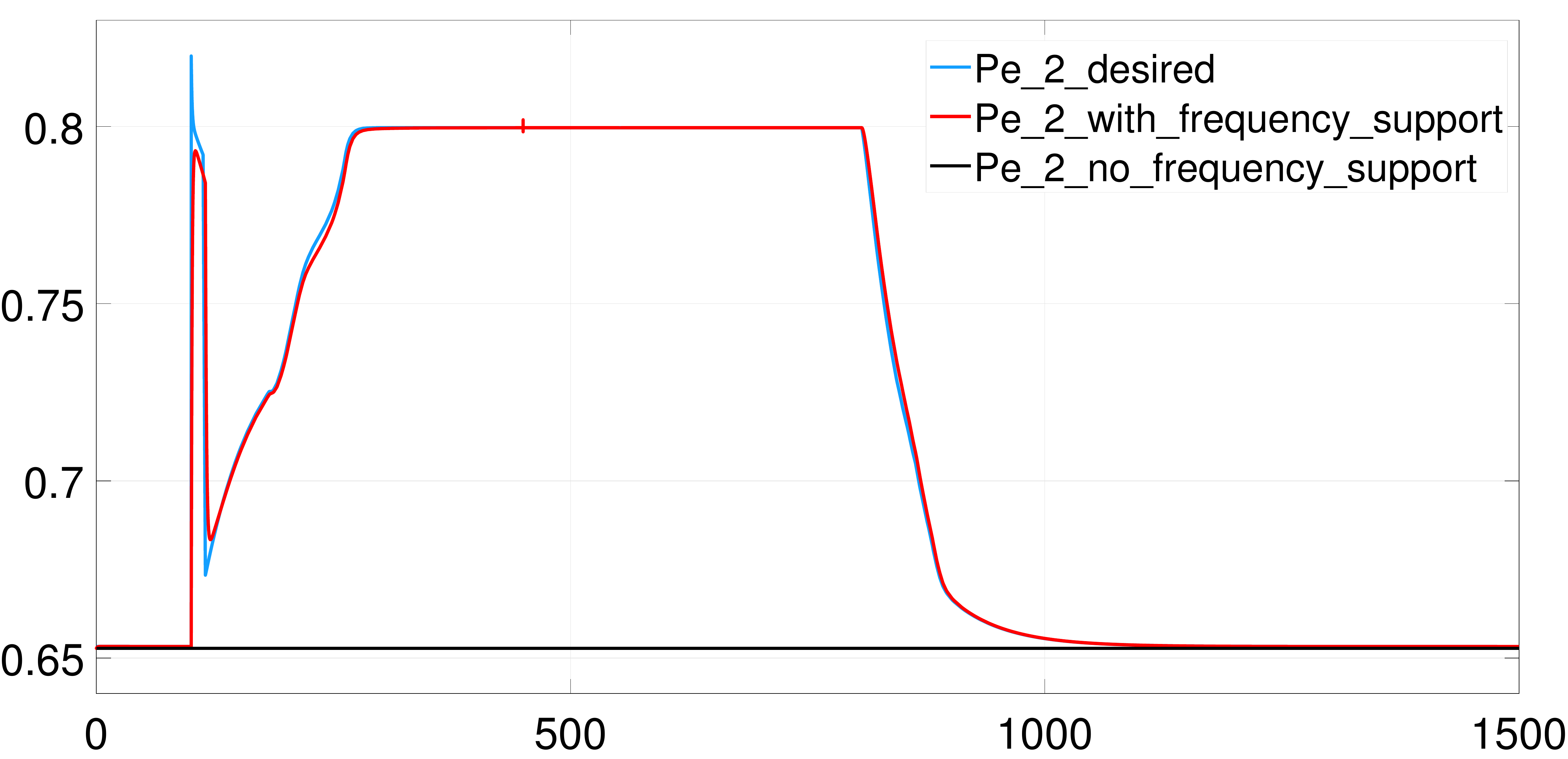}
    \caption{Active power (pu) of PMSG2 (frequency service)}
    \label{fig:Sim_3_PMSG2_P2}
\end{figure}

\begin{figure}[H]
    \centering
    \includegraphics[scale=0.14]{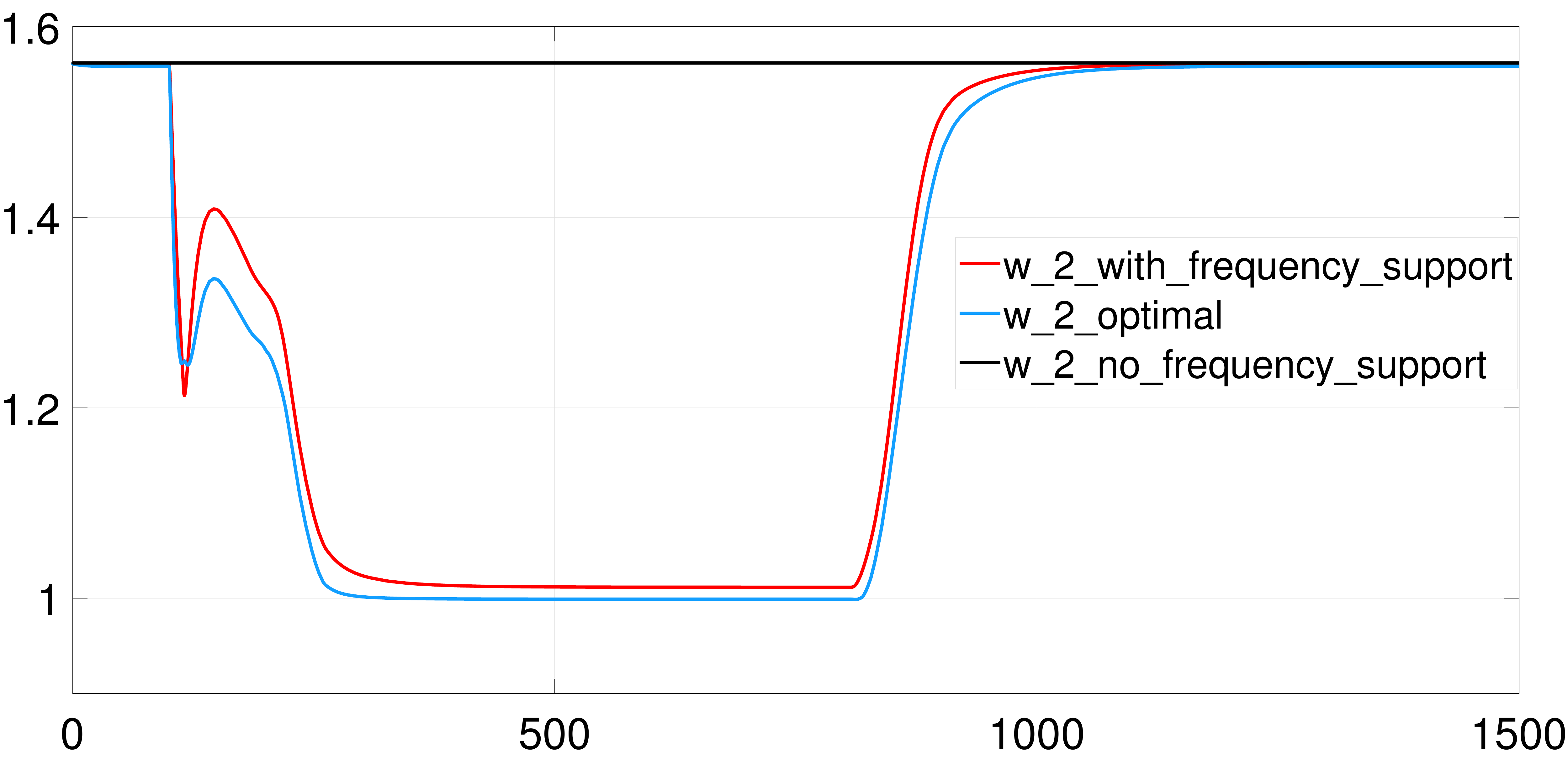}
    \caption{Generator speed of PMSG2 (frequency service)}
    \label{fig:Sim_3_PMSG2_w2}
\end{figure}

\begin{figure}[H]
    \centering
    \includegraphics[scale=0.14]{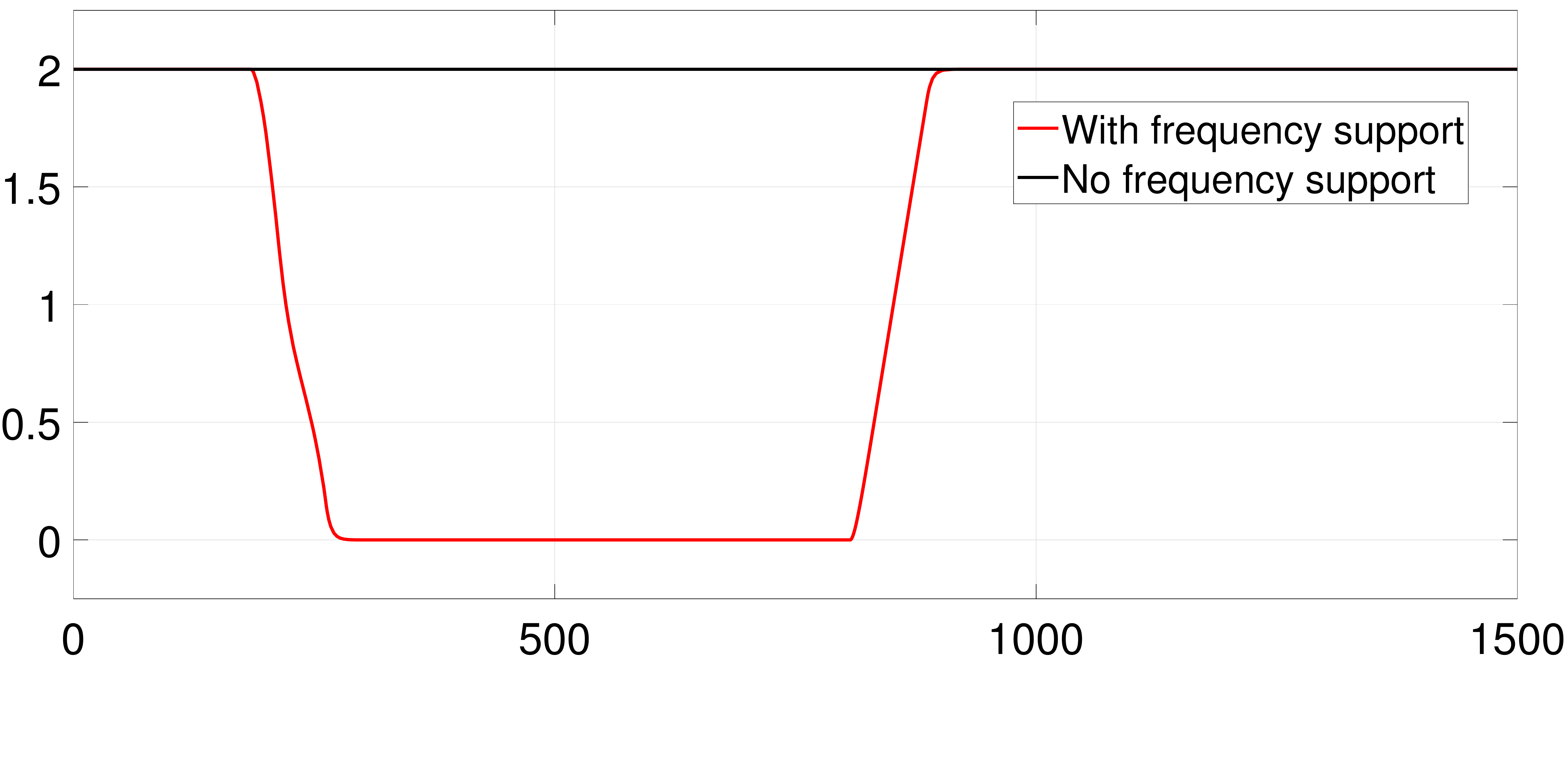}
    \caption{Pitch angle (deg) of PMSG2 (frequency service)}
    \label{fig:Sim_3_PMSG2_pitch_angle}
\end{figure}

\begin{figure}[H]
    \centering
    \includegraphics[scale=0.14]{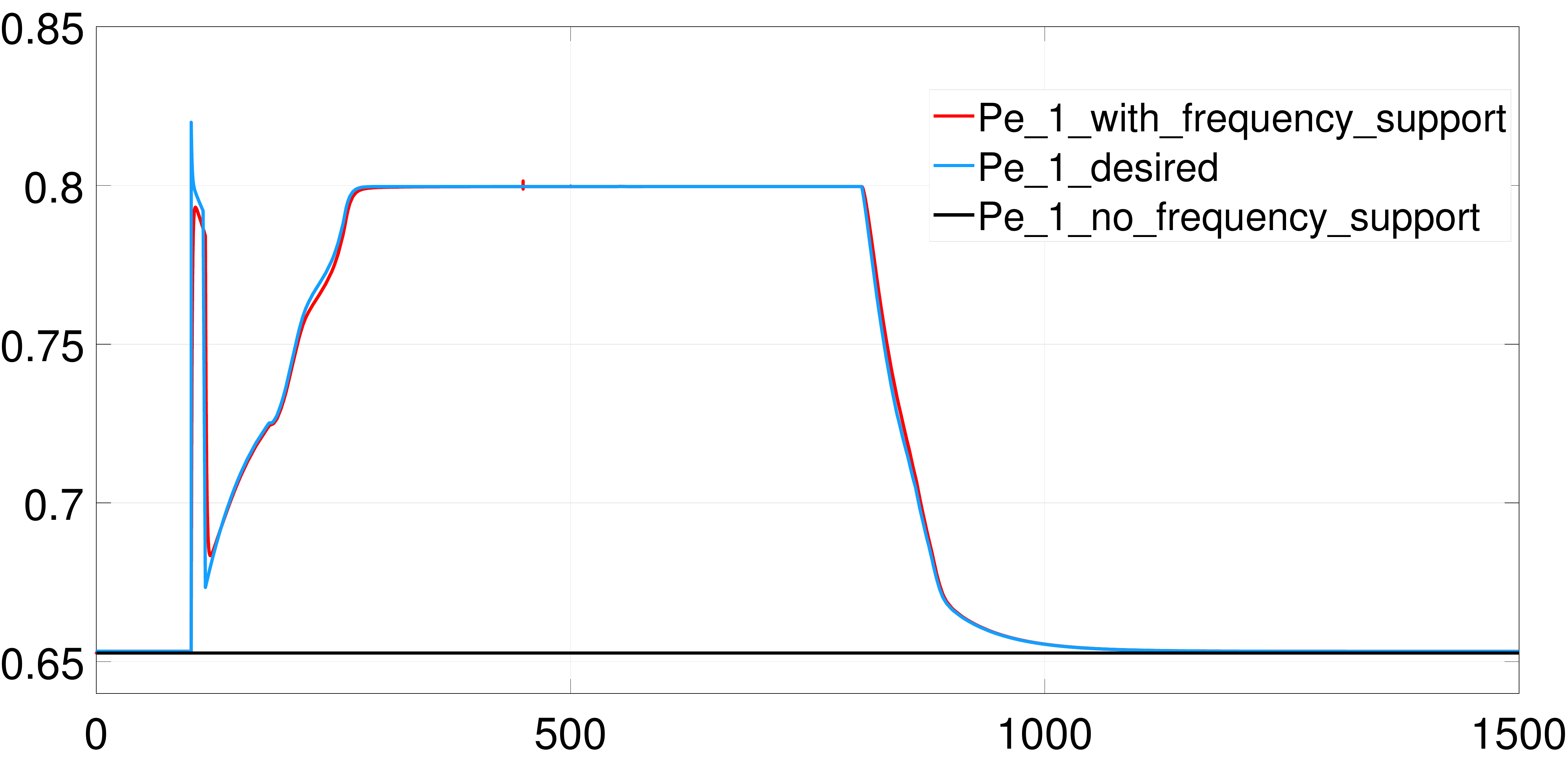}
    \caption{Active power (pu) of PMSG1 (frequency service)}
    \label{fig:Sim_3_PMSG1_P1}
\end{figure}

\begin{figure}[H]
    \centering
    \includegraphics[scale=0.14]{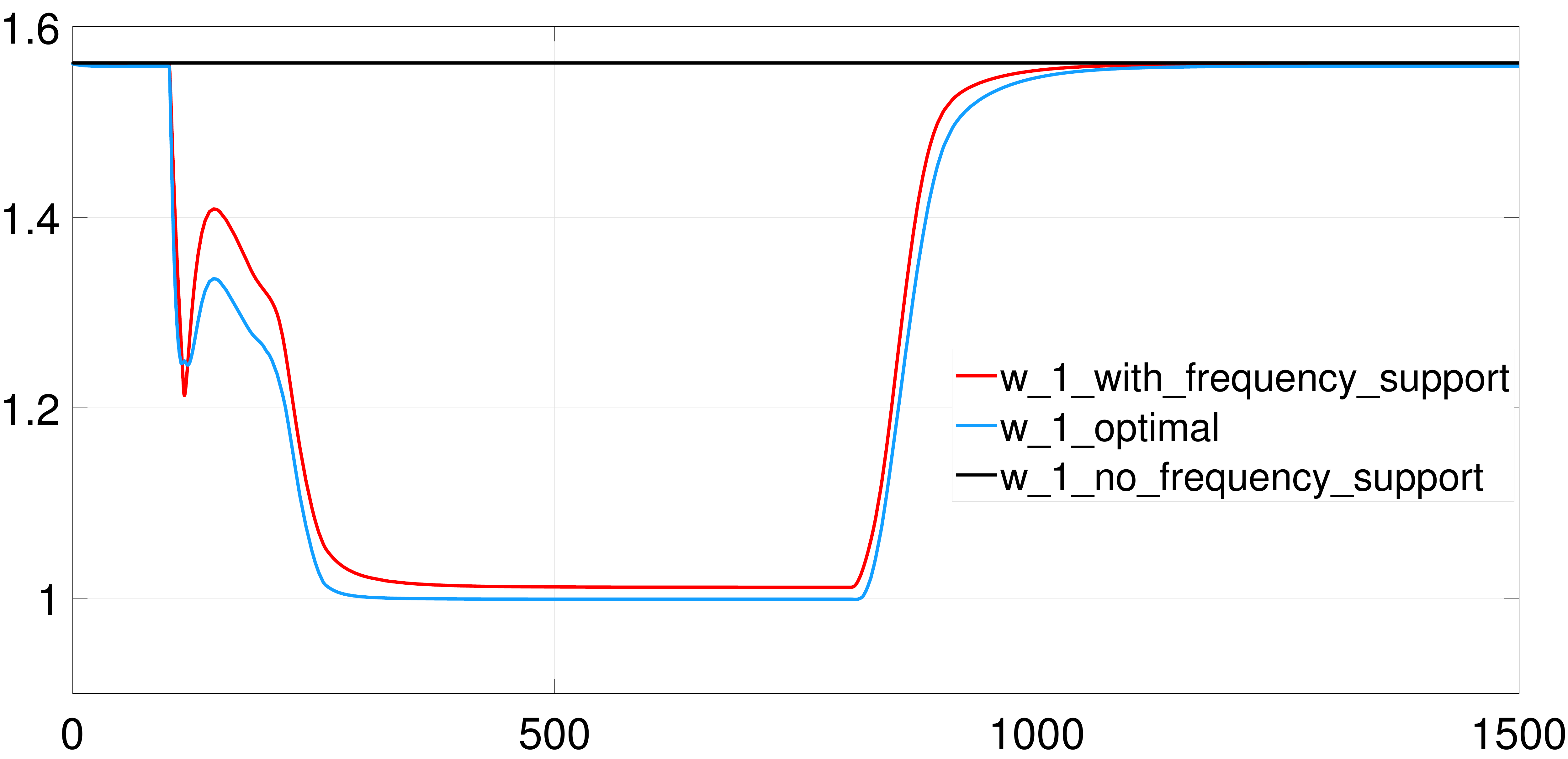}
    \caption{Generator speed of PMSG1 (frequency service)}
    \label{fig:Sim_3_PMSG1_w1}
\end{figure}

\begin{figure}[H]
    \centering
    \includegraphics[scale=0.14]{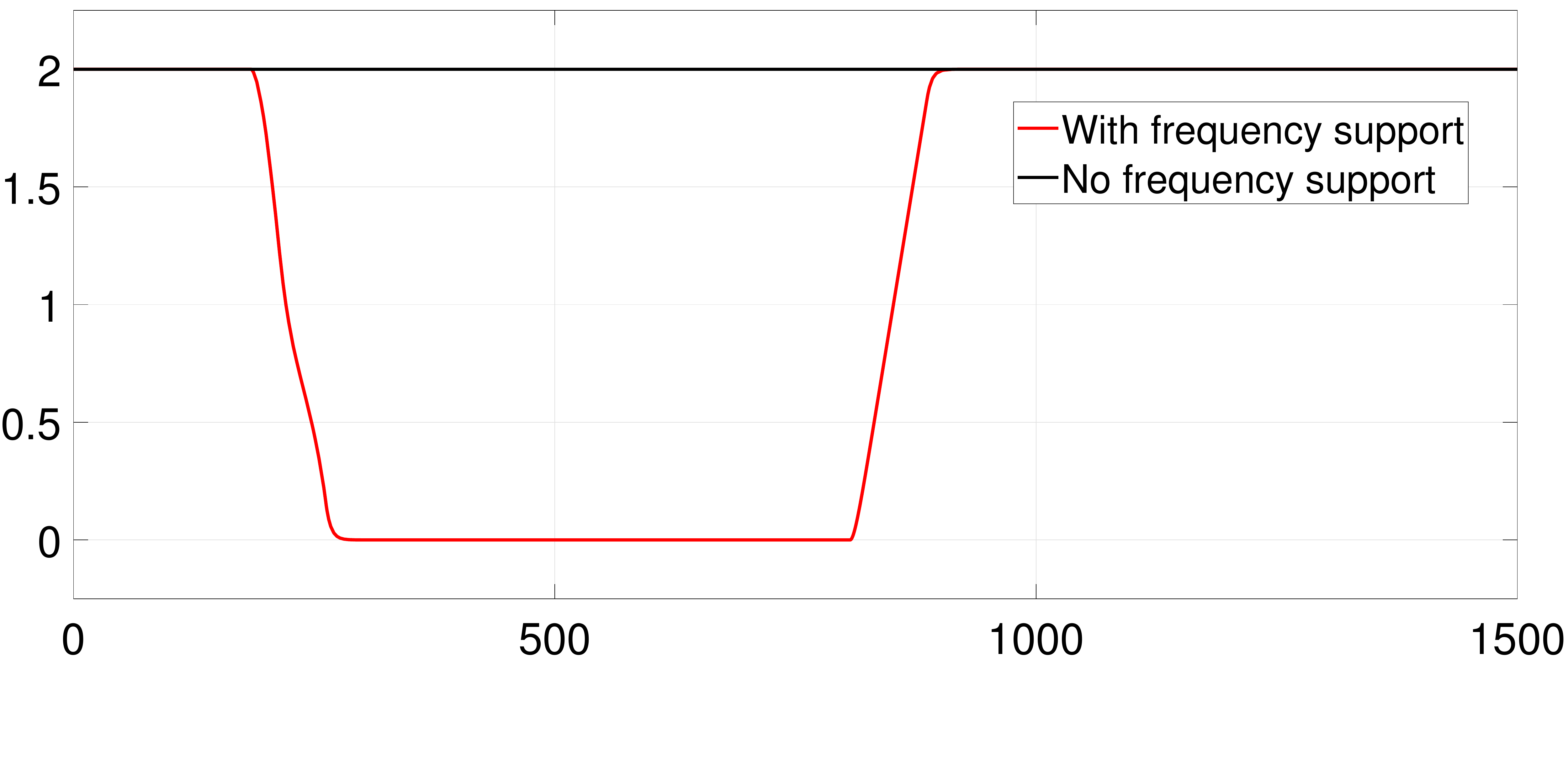}
    \caption{Pitch angle (deg) of PMSG1 (frequency service)}
    \label{fig:Sim_3_PMSG1_pitch_angle}
\end{figure}

\begin{figure}[H]
    \centering
    \includegraphics[scale=0.14]{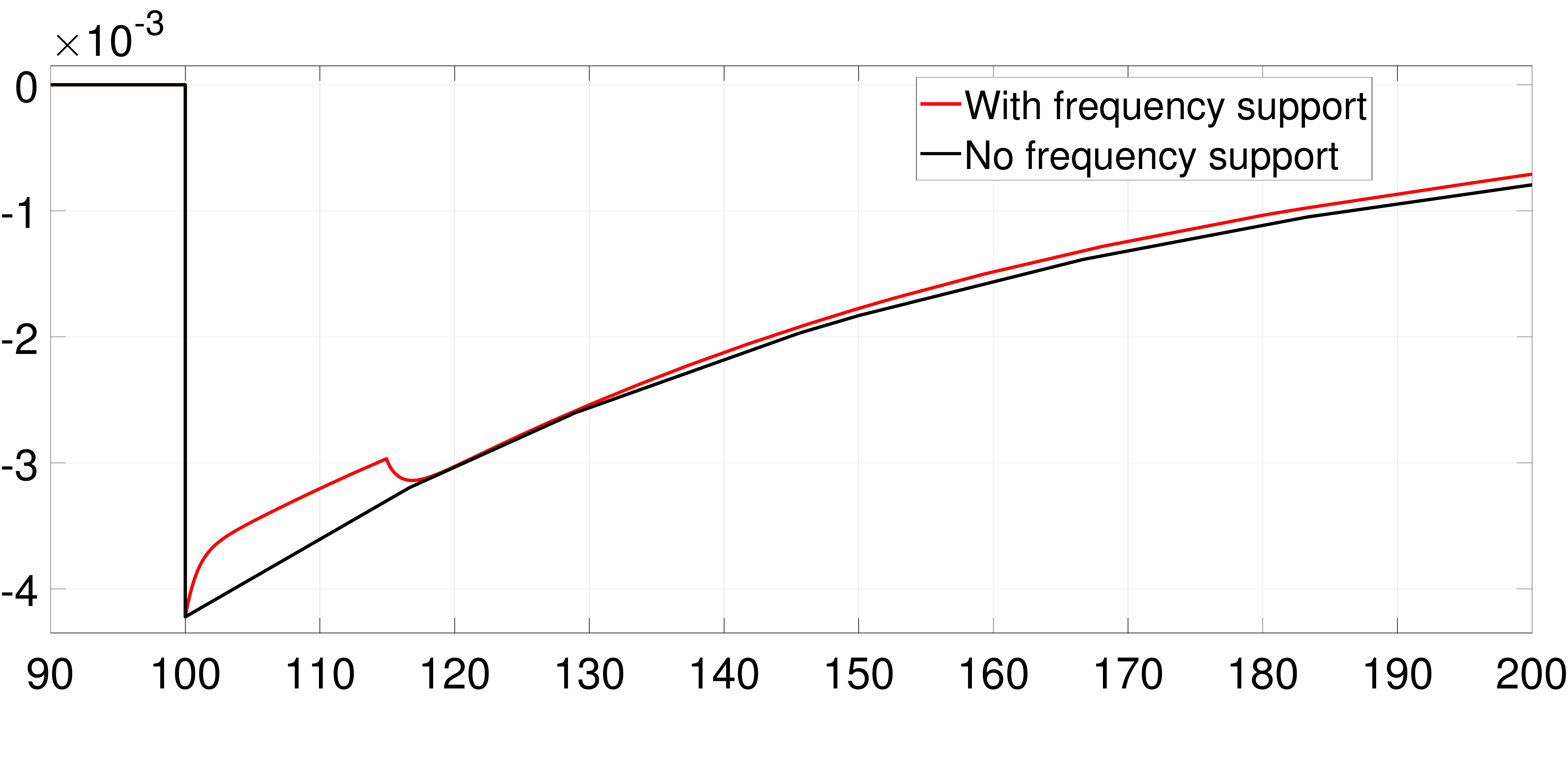}
    \caption{RoCoF (pu) behavior (frequency service)}
    \label{fig:Sim_3_RoCoF}
\end{figure}

\begin{figure}[H]
    \centering
    \includegraphics[scale=0.14]{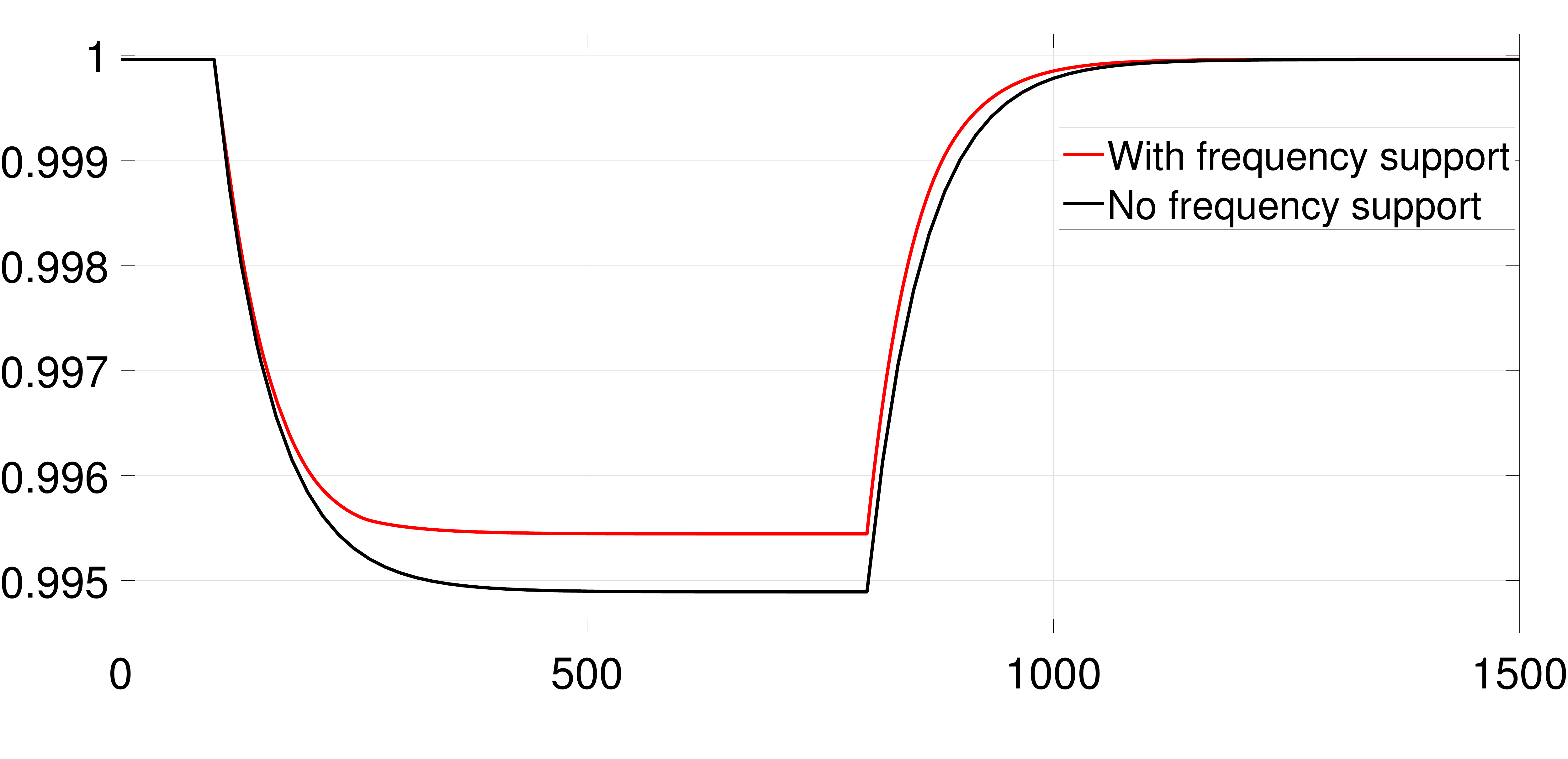}
    \caption{Frequency (pu) of the grid (frequency service)}
    \label{fig:Sim_3_wf}
\end{figure}

%Simulation 4: Comparison between coordinated control and classical vector control in short-circuit scenario

\begin{figure}[H]
    \centering
    \includegraphics[scale=0.14]{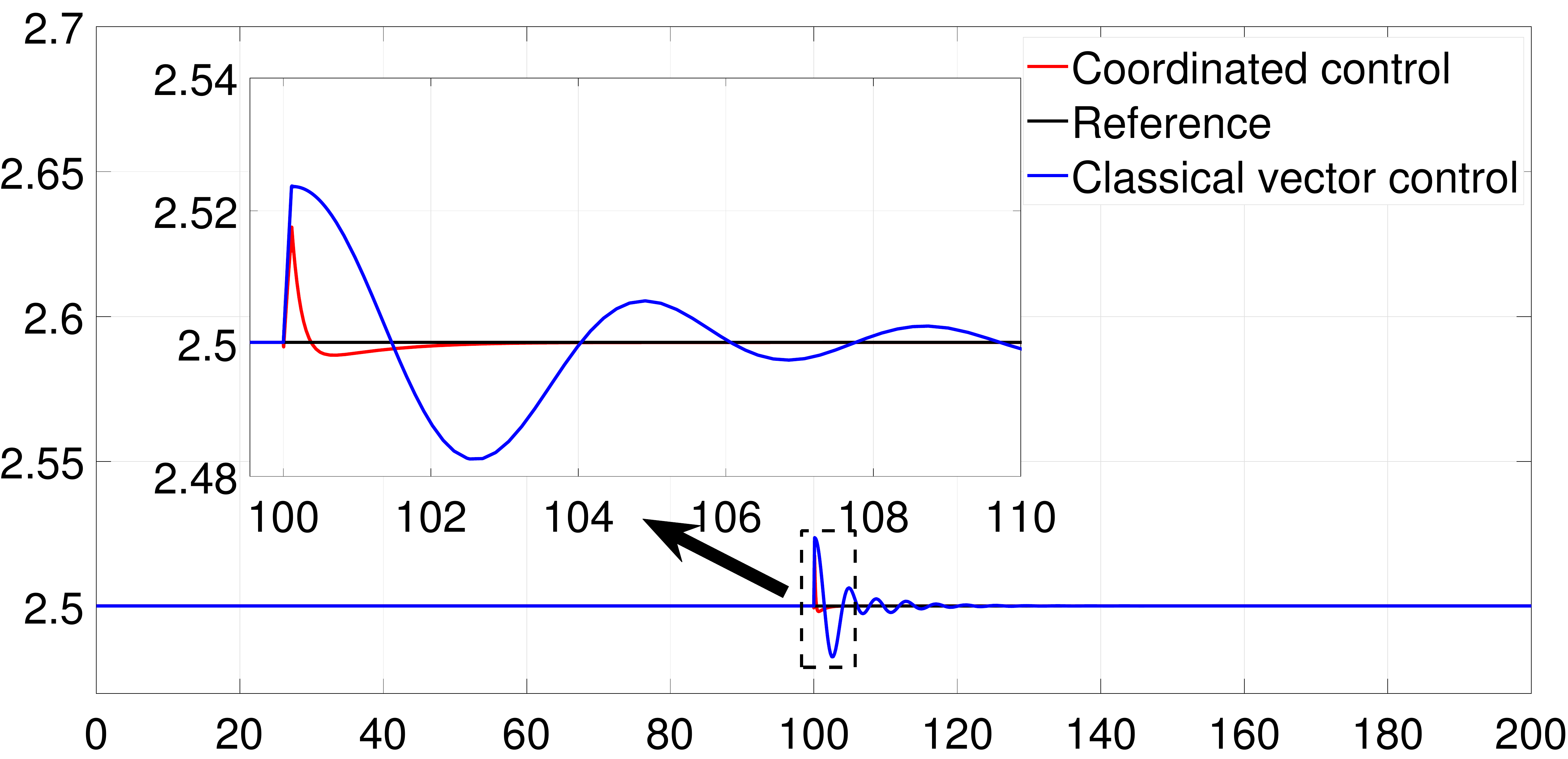}
    \caption{DC voltage (pu) of PMSG1 (coordinated control vs vector control)}
    \label{fig:Sim_4_PMSG1_v_DC1}
\end{figure}

\begin{figure}[H]
    \centering
    \includegraphics[scale=0.14]{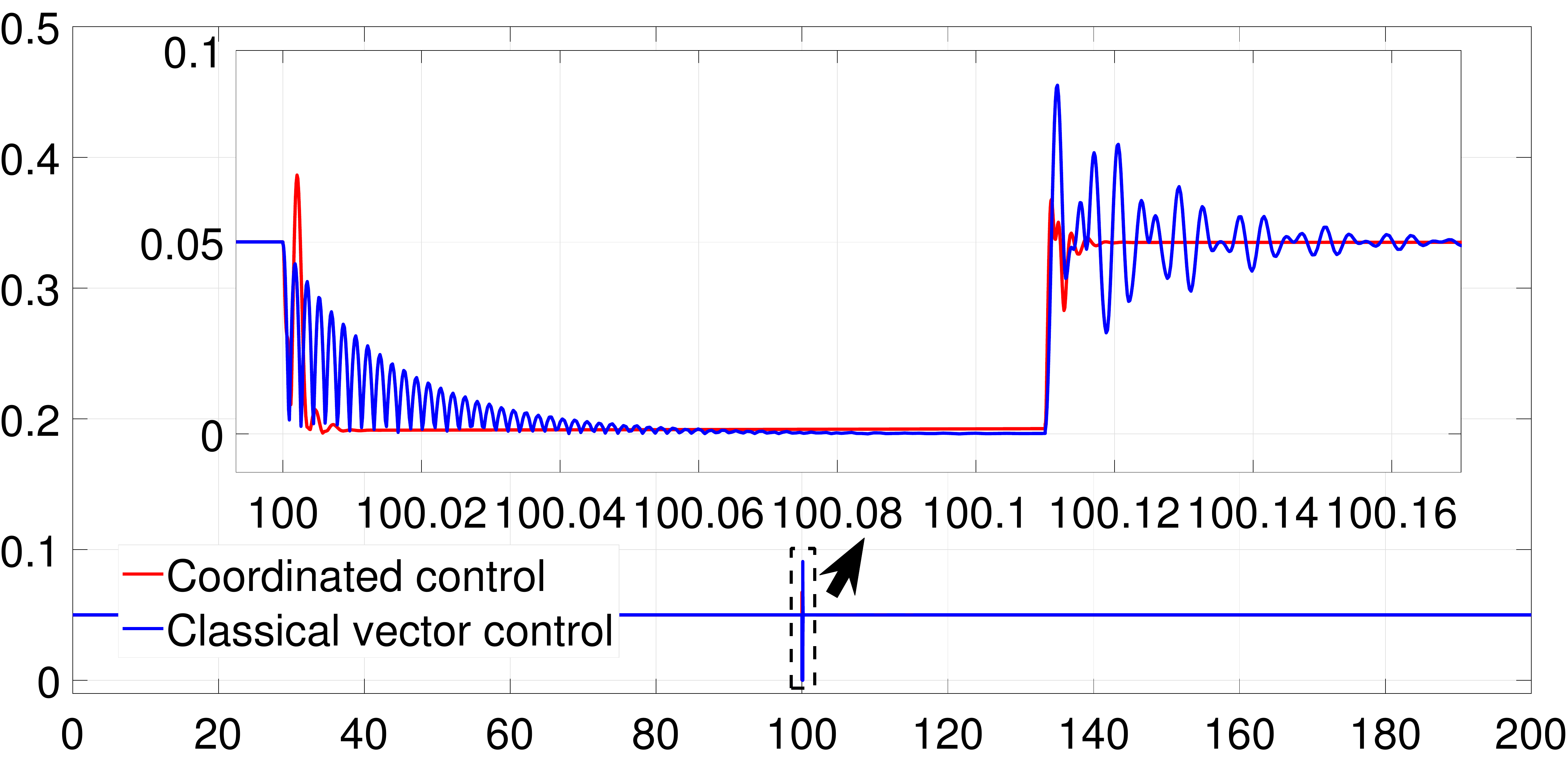}
    \caption{AC terminal voltage $V_{r1}$ (pu) of PMSG1 (coordinated control vs vector control)}
    \label{fig:Sim_4_PMSG1_V_r1}
\end{figure}

\begin{figure}[H]
    \centering
    \includegraphics[scale=0.14]{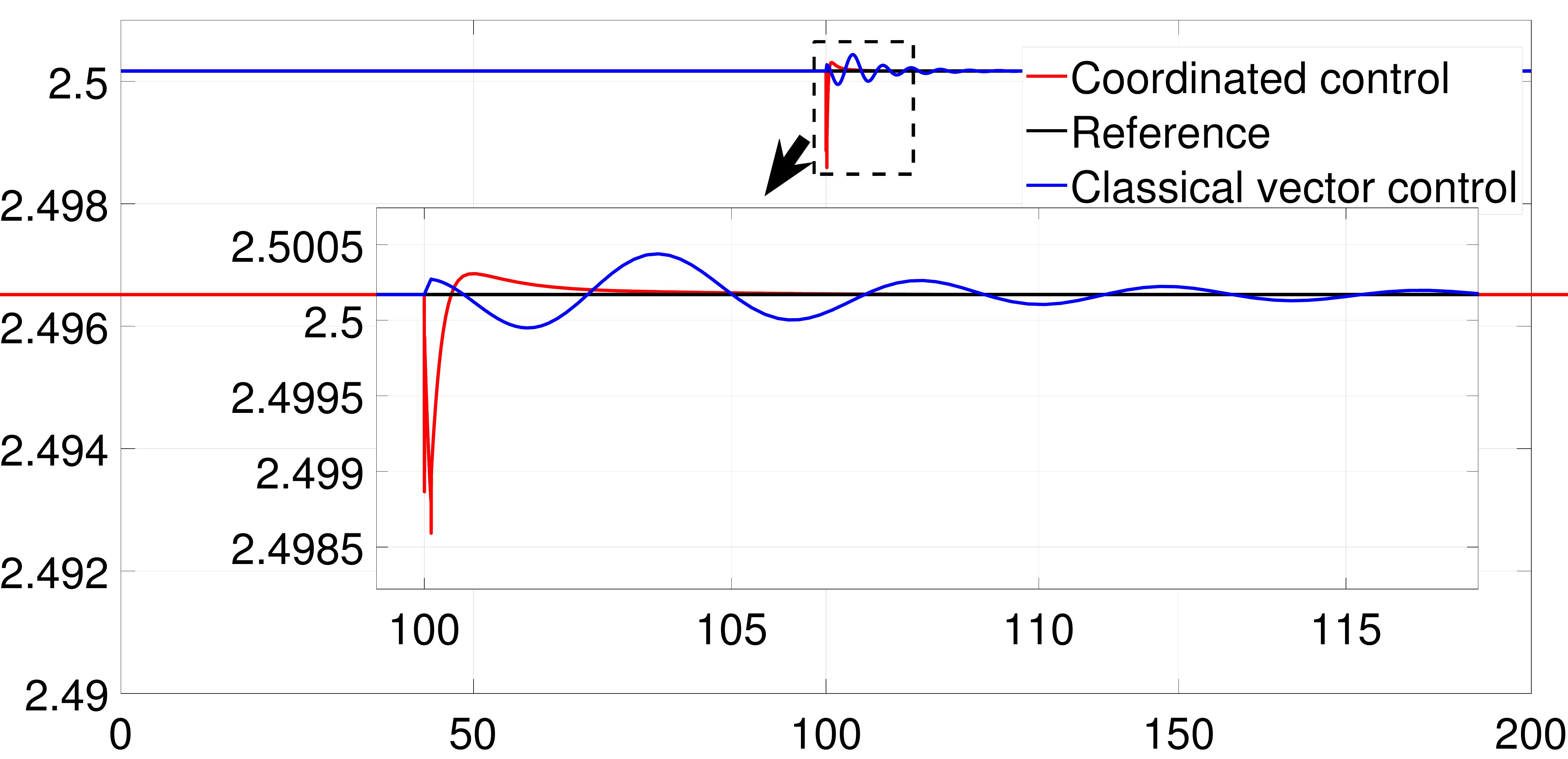}
    \caption{DC voltage (pu) of PMSG2 (coordinated control vs vector control)}
    \label{fig:Sim_4_PMSG2_v_DC2}
\end{figure}

\begin{figure}[H]
    \centering
    \includegraphics[scale=0.14]{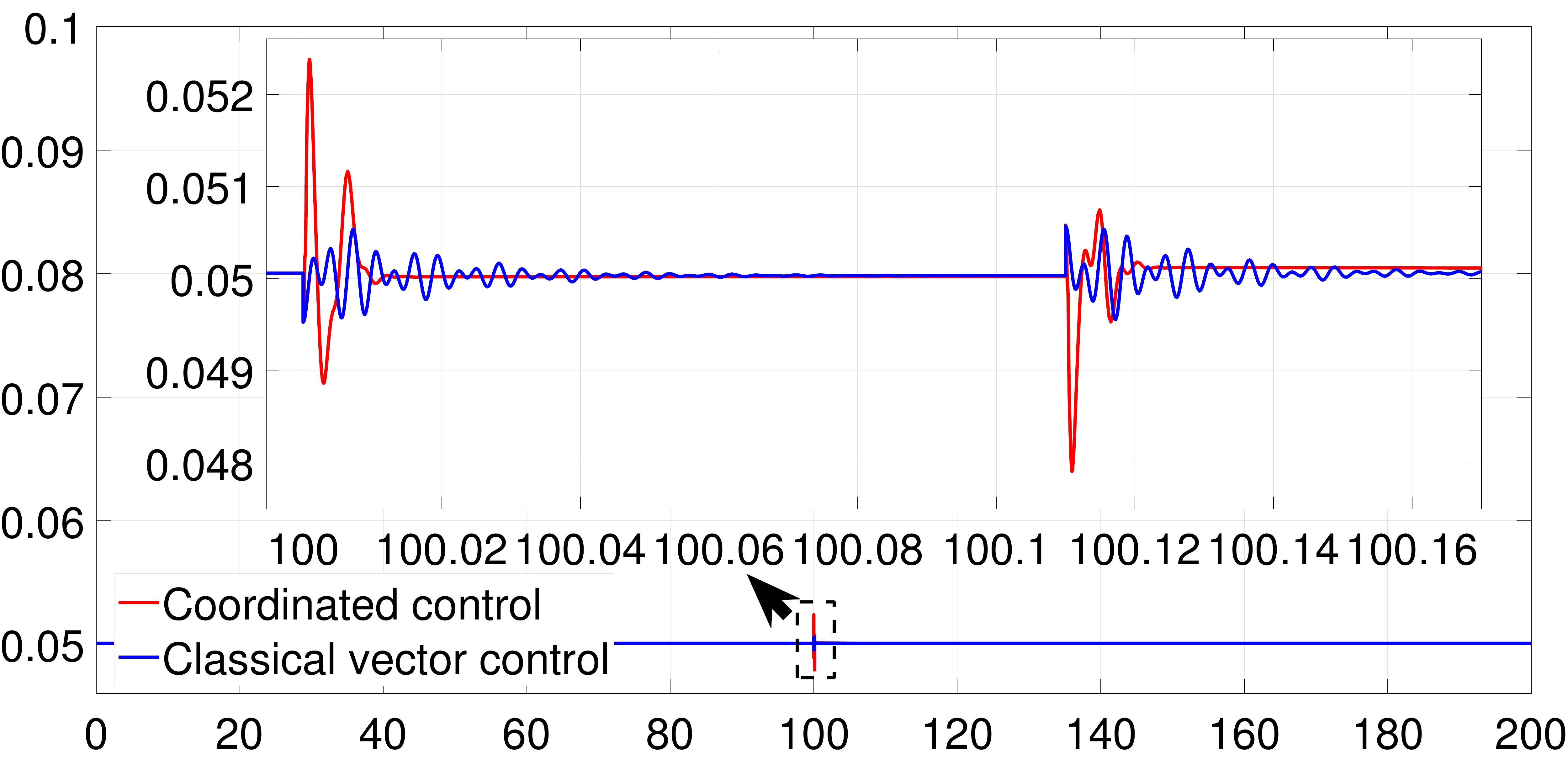}
    \caption{AC terminal voltage $V_{r2}$ (pu) of PMSG2 (coordinated control vs vector control)}
    \label{fig:Sim_4_PMSG2_V_r2}
\end{figure}

\section{Conclusions}
To address full participation of PPMs in grid ancillary services, we have first critically reviewed the classic regulation framework and main differences at dynamic level between PPMs and classic synchronous generators. It has been concluded that old hypothesis of separation between V and f dynamics should be skipped at all levels: modeling and synthesis of controls. Coordination should be achieved at these levels as well as for other aspects. As a consequence, a \textit{new control model} which gathers all dynamics of interest has been proposed. Based on it, a strategy of robust advanced model-based control was proposed. The proposed approach is based on the overall \textit{time-scales separation} of the dynamics of the global system and of its control objectives. This makes a significant difference with the classic vector control and related approaches which propose several simple PI loops around each actuator of the global system. Maximum coordination is achieved as all actuators and dynamics of interest are taken into account in the control model on which the synthesis of the control is based. This coordination is among the 2 converters of a same PPM or even among converters (and other dynamics) of several PPMs when control is addressed for a class of PPMs as in the example treated.

This aspect will be fully developped in following work and publications in the framework of H2020 POSYTYF project where a class of PPMs will be controlled in a new concept called Dynamic Virtual Power Plant to maximize RES participation to grid ancillary services.

The proposed control methodology and concepts are general and can be used for any type of PPM or RES connected to the grid with power converters. Elements of the time-decoupling diagram in Fig. \ref{fig:Time_decoupling} should be specifically used for each case. For example, in case of a storage element (battery), it is expected to control only the very fast dynamics. 

Synthesis of the control in close relation to the time-scale of phenomena and objectives paves the way to participation of PPMs and RESs to grid ancillary services. This means to really integrate the new sources into today existing secondary V and f controls, at the same level of exigences as the classic synchronous generators and not only to provide some implicit grid support, which is difficult to quantify and relay on it. Integration of PPMs into the secondary level control, i.e., stage "`slow"' in Fig. \ref{fig:Time_and_space_separation} is also a next step in our work in POSYTYF project.

The presented methodology is a \textit{model-based} one which is open to any advanced control. Here loop-shaping was achieved by H-infinity synthesis. In next work we will add fuzzyfication to improve behavior with respect to nonlinearities and variations of system operating conditions. Also, extended validation in hardware-in-the loop environment will be next done.

\appendices

\section{Classic control of wind PPMs}

\subsection{Droop control with deloading methods}\label{DroopControl}

Droop control, since the era of classical synchronous generators, is an important control loop to regulate frequency based on active power control of a grid \cite{Kundur}. However, with RES, especially wind generators, to minimize the overall cost, the main goal was to maximize power generation using MPPT \cite{2}-\cite{3}, \cite{7}, and this leaves no room for active power regulation in case of grid incident. Hence, a new method emerged, which try to reserve an amount of energy by regulating the produced energy at lower level than MPPT, called \textit{deloading control} \cite{7}, \cite{18}, \cite{35}-\cite{38}.

10\% is a typical value for the reserve of the deloading control using generator speed control. Higher reserve can be ensured if pitch angle control is used in addition. 

Firstly, consider the pitch angle is kept at 0 (deg), the deloading relationship between wind speed, generator speed and wind turbine mechanical power is given in Fig. \ref{fig:2D_Deloading_control}.

\begin{figure}[H]
    \centering
    \includegraphics[scale=0.18]{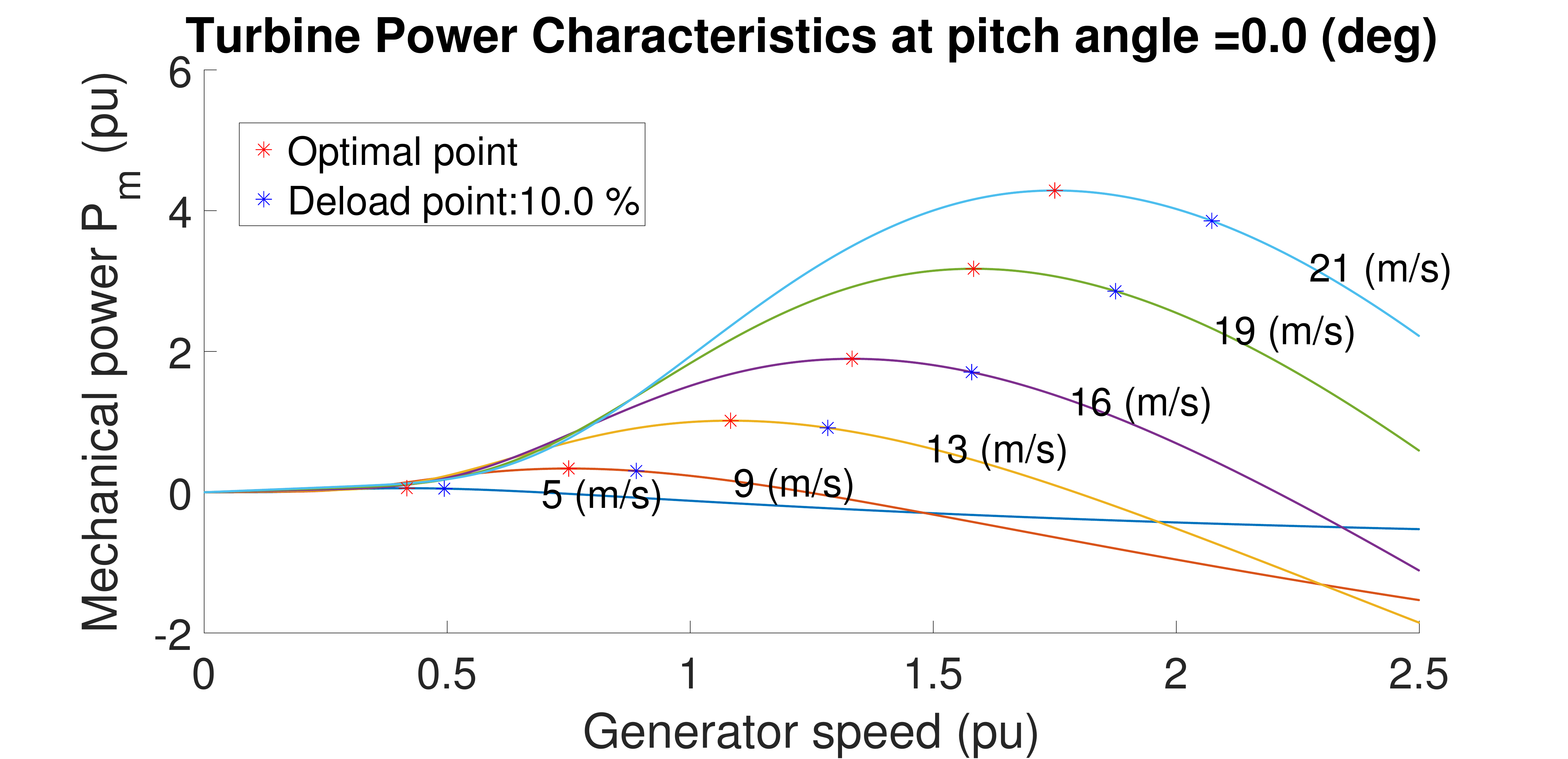}
    \caption{The output power at different wind speeds at MPPT and at 10\% deload}
    \label{fig:2D_Deloading_control}
\end{figure}

To understand the deloading strategy, consider the wind turbine is operating at wind speed of 16m/s. The wind turbine will produce power at 10$\%$ deloading point A of Fig. \ref{fig:Generator_speed_deloading} to reserve 10$\%$ of its power capacity.  

\begin{figure}[H]
    \centering
    \includegraphics[scale=0.18]{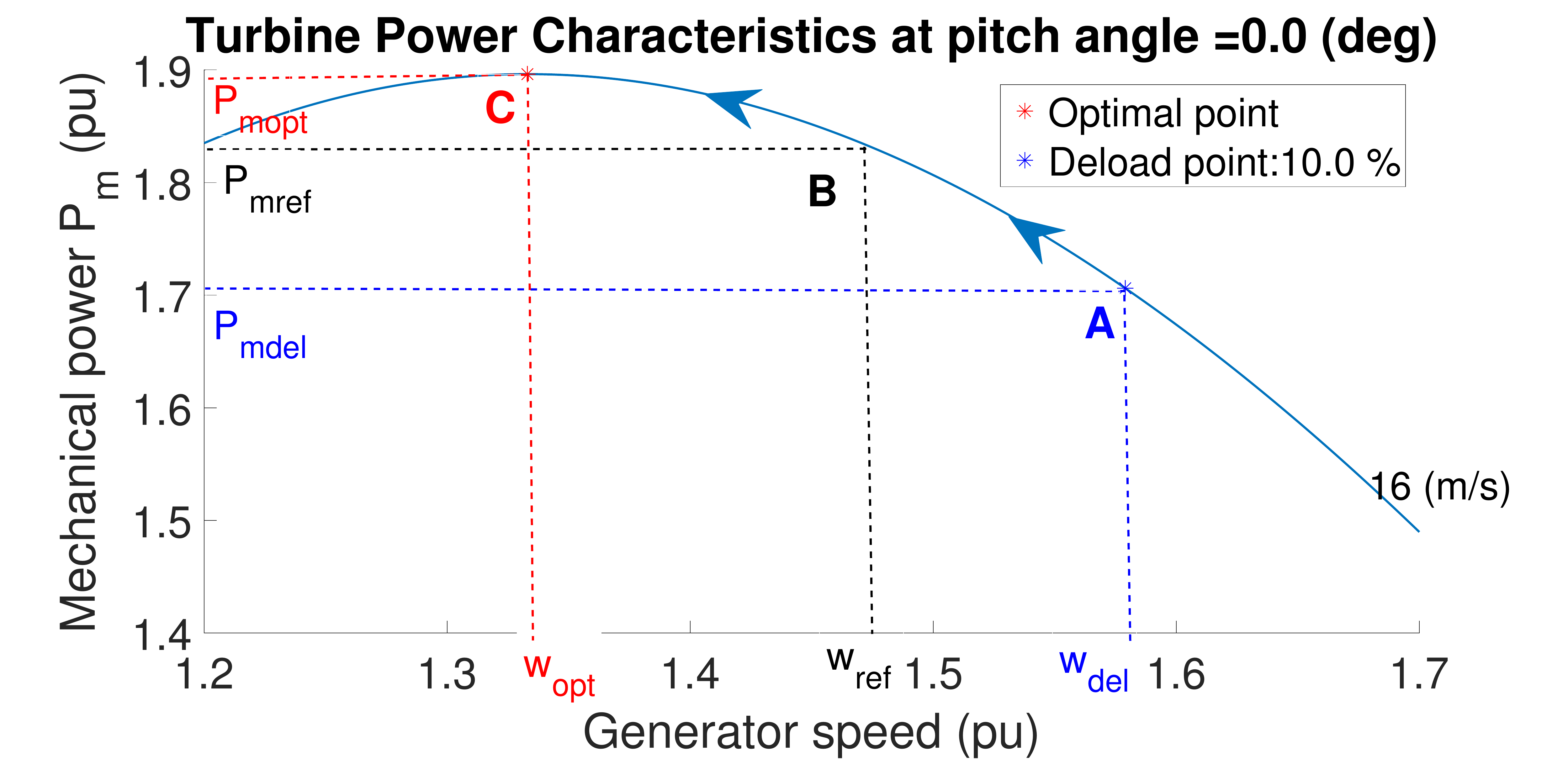}
    \caption{The generator speed deloading strategy at wind speed of 16 m/s}
    \label{fig:Generator_speed_deloading}
\end{figure}

To utilize this reserved power for frequency support, a droop relationship is established \cite{Kundur}

\begin{equation}
\Delta {P_{ref}} =  - \frac{1}{R}\Delta f,
\label{eq:droop_relationship}
\end{equation}

where $\Delta {P_{ref}}$ is the required amount of change in power when there is a change in frequency $\Delta f$.

A consequence of a shortage in power of the the grid is a fall in grid frequency \cite{Kundur}. This fall will partly be compensated by the reserved power of wind turbine. At this moment, through the droop relationship \eqref{eq:droop_relationship}, the wind turbine will have to move its operating point from A to B to increase its mechanical power. The operating point is now at

\begin{equation}
P_{mref} =  P_{mdel} + \Delta {P_{ref}} = P_{mdel} - \frac{1}{R}\Delta f.
\label{eq:P_mref_deload_relationship}
\end{equation}

This $P_{mref}$ will be used to determine desired generator speed $\Omega_{ref}$, and through generator speed control to shift the operating point of wind turbine. This action can easily be achieved by a look-up table \cite{7} \cite{35} - \cite{38}, but with a cost in delay due to the time for searching the equivalent value of $\Omega_{ref}$. To overcome this, because the two points of optimal and deloading points are very close, it is possible to consider that the three points A, B and C are collinear. As a consequence, the relationship between $P_{mref}$ and $\Omega_{ref}$ is \cite{7}

\begin{equation}
{P_{mref}} = {P_{mdel}} + \left( {{P_{mopt}} - {P_{mdel}}} \right)\frac{{{\Omega_{del}} - {\Omega_{ref}}}}{{{\Omega_{del}} - {\Omega_{opt}}}}
\label{eq:P_mref_deload_generator_speed}
\end{equation}

Based on \eqref{eq:P_mref_deload_generator_speed}, the value of $\Omega_{ref}$ is

\begin{equation}
{\Omega_{ref}} = {\Omega_{del}} - \left( {{\Omega_{del}} - {\Omega_{opt}}} \right)\left( { - \frac{1}{R}\frac{{\Delta f}}{{{P_{mopt}} - {P_{mdel}}}}} \right)
\label{eq:w_ref_deload_relationship}
\end{equation}

Secondly, consider all the deloading relationships between wind speed, generator speed, pitch angle and wind turbine mechanical power in the 3D scheme of Fig. \ref{fig:3D_Deloading_control}.

\begin{figure}[H]
    \centering
    \includegraphics[scale=0.11]{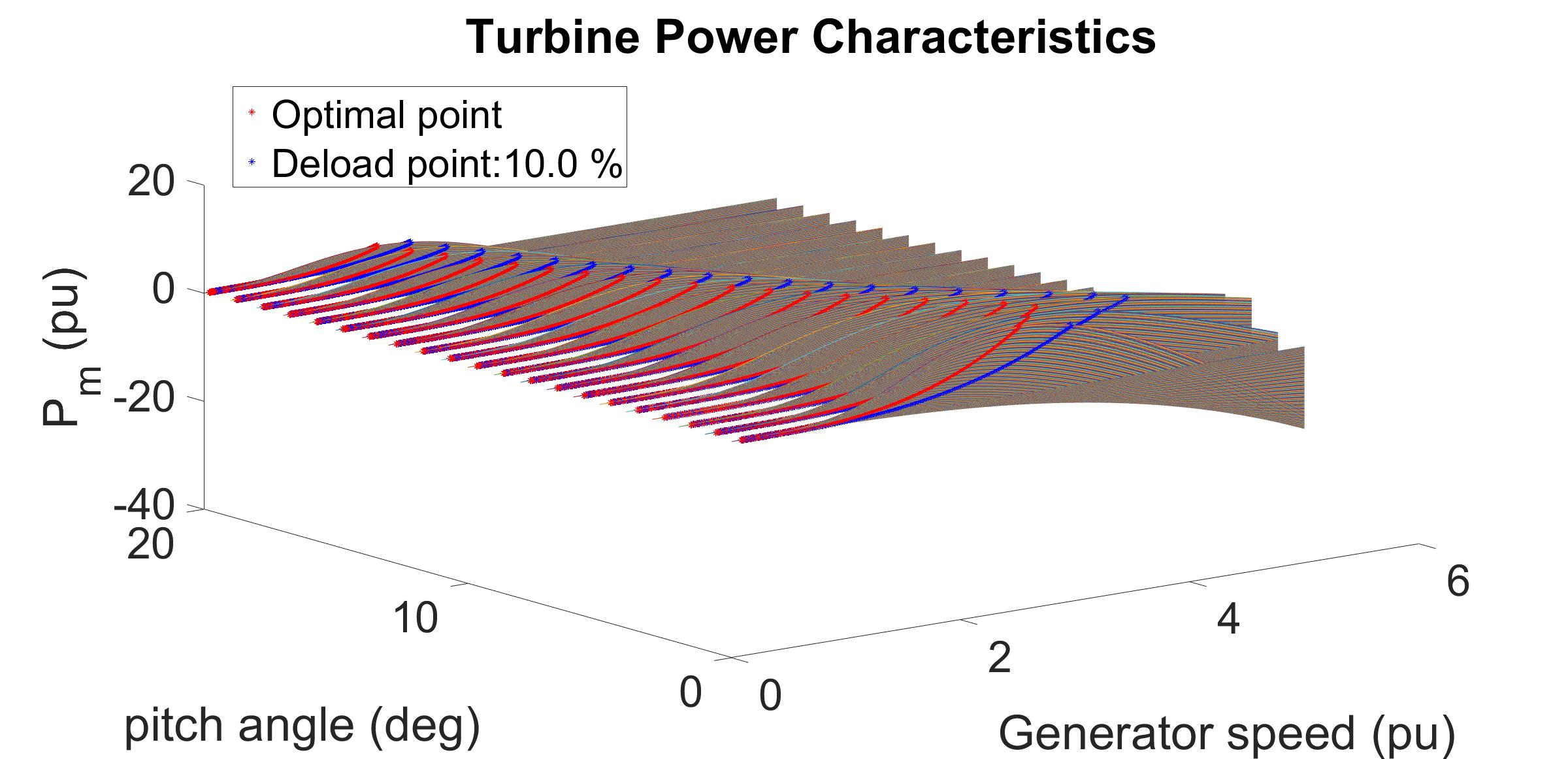}
    \caption{The output power at different wind speeds (1-30 m/s) at MPPT and at 10\% deload}
    \label{fig:3D_Deloading_control}
\end{figure}

Again, to deload the wind turbine using pitch angle control, the pitch angle will be kept at deloading point instead of optimal point (0 deg) \cite{7}, \cite{35} - \cite{38}. To understand the deloading pitch angle strategy, consider the wind turbine is operating at constant wind speed 12 (m/s) for Fig. \ref{fig:Pitch_angle_deloading}.

\begin{figure}[H]
    \centering
    \includegraphics[scale=0.18]{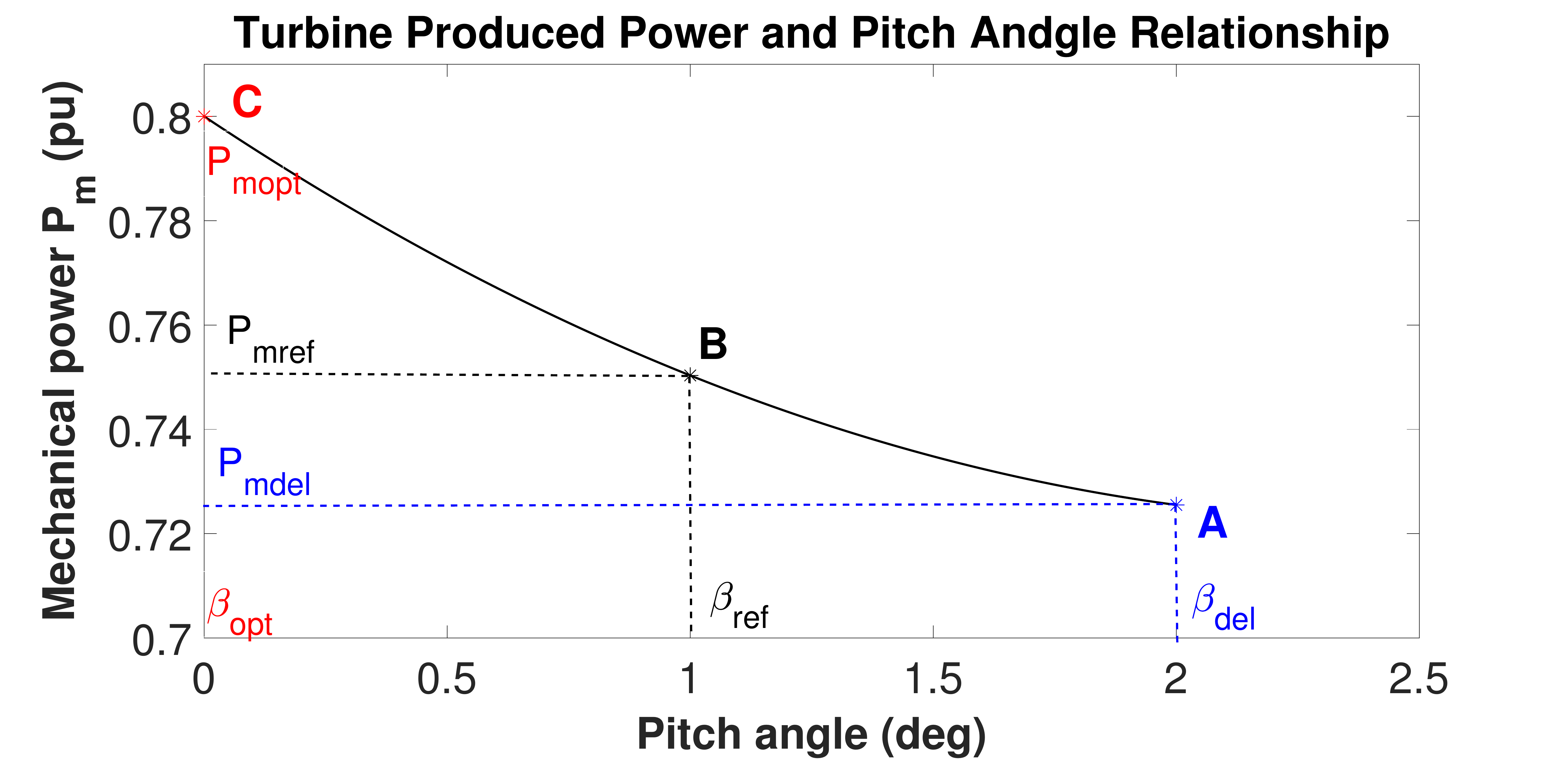}
    \caption{The pitch angle deloading strategy at wind speed of 12 m/s}
    \label{fig:Pitch_angle_deloading}
\end{figure}

At steady-state, wind turbine will operate at point A  with pitch angle $\beta_{del}$ and produce power at $P_{del}$. Following a request to increase the power \eqref{eq:P_mref_deload_relationship} via the droop control \eqref{eq:droop_relationship}, the pitch angle will be changed from point A to point B.

Once again, a look-up table can be used, but this results in introducing significant delay into the control loop.  This problem can be dealt with the same method as in \eqref{eq:P_mref_deload_generator_speed}. The relationship between desired produced power $P_{mref}$ and desired pitch angle $\beta_{ref}$ is then

\begin{equation}
{P_{mref}} = {P_{mdel}} + \left( {{P_{mopt}} - {P_{mdel}}} \right)\frac{{{\beta _{del}} - {\beta _{ref}}}}{{{\beta _{del}} - {\beta _{opt}}}}
\label{eq:P_mref_deload_pitch_angle}
\end{equation}

From droop control relationship \eqref{eq:P_mref_deload_relationship} and \eqref{eq:P_mref_deload_pitch_angle}, the desired pitch angle $\beta_{ref}$ is

\begin{equation}
{\beta _{ref}} = {\beta _{del}} - \left( {{\beta _{del}} - {\beta _{opt}}} \right)\left( { - \frac{1}{R}\frac{{\Delta f}}{{{P_{mopt}} - {P_{mdel}}}}} \right).
\label{eq:beta_ref_deload_relationship}
\end{equation}

\subsection{Inertia control for wind turbine}\label{InertiaControl}

Currently, there are two popular types of fast support for RoCoF improvement for RES: hidden inertia control \cite{7} \cite{8}-\cite{12} and fast power reserve \cite{7}, \cite{13}-\cite{16}. Because of the lack of the space, only the first one is presented and used (Fig. \ref{fig:Hidden_inertia}).

Hidden inertia \cite{9}, or virtual inertia \cite{8}, or inertia emulation \cite{7} is basically a way to react to the change of RoCoF, by feedback RoCoF to the reference of electrical torque or electrical active power

\begin{equation}\label{eq:Hidden_inertia}
\Delta {P_{eref}} = K\frac{{df}}{{dt}},
\end{equation}

where $\Delta {P_{eref}}$ is the "needed" power to compensate the drop in RoCoF, $f$ is the nearby grid frequency and $K$ is the inertia response loop gain, which depends on inertia constant of wind turbine and may depend on frequency $f$ as in \cite{7} - \cite{9}. The hidden inertia control can also be modified and utilized at the same time with droop control to achieve better frequency support \cite{7}-\cite{8}.

\begin{figure}[H]
    \centering
    \includegraphics[scale=0.5]{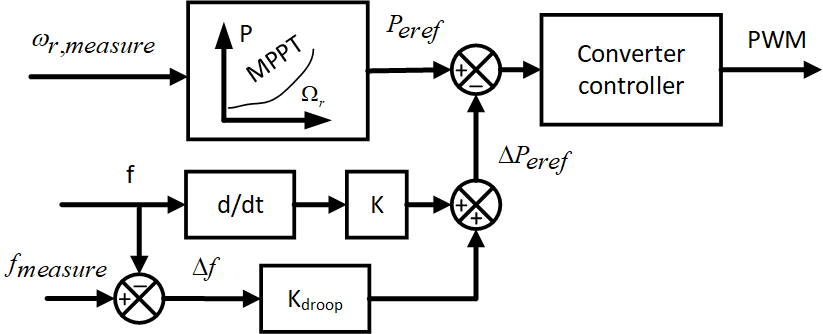}
    \caption{The hidden inertia control structure \cite{7}}
    \label{fig:Hidden_inertia}
\end{figure}

The gain $K$ is estimated starting from the change of power of the grid due to frequency variation

\begin{equation}
\Delta {P_L} = 2H\frac{{df}}{{dt}},
\end{equation}

 where $H$ is the inertia constant of the grid, $f$ is grid frequency and $\Delta {P_L}$ is the total power change. The idea behind inertia control is to immediately produce the same amount of total power to keep the RoCoF as small as possible. This means that the required produced power should be
 
\begin{equation}
\Delta {P_{eref}} =  - \Delta {P_L} =  - 2H\frac{{df}}{{dt}}
\label{eq:P_eref_Hidden_inertia}
\end{equation}

Hence, the gain $K$ and droop gain are

\begin{equation} \label{eq:RoCoF_and_droop_gain}
\left\{ \begin{array}{l}
K =  - 2H\\
{K_{droop}} =  - \frac{1}{R}
\end{array} \right.
\end{equation}

\subsection{Vector control for converters}\label{subsectionVectorControl}

Vector control is a very popular method in the field of power converters control \cite{vector_control_1}, \cite{vector_control_3} - \cite{vector_control_6}. It is structured in two control loops decoupled in time by the control itself: one fast for the current, called \textit{inner loop} and one slower for the voltage (and/or reactive and active power), called \textit{outer loop}. They are given in Fig. \ref{fig:vector_control_structure} for the grid-side converter.

\begin{figure}[H]
    \centering
    \includegraphics[scale=0.35]{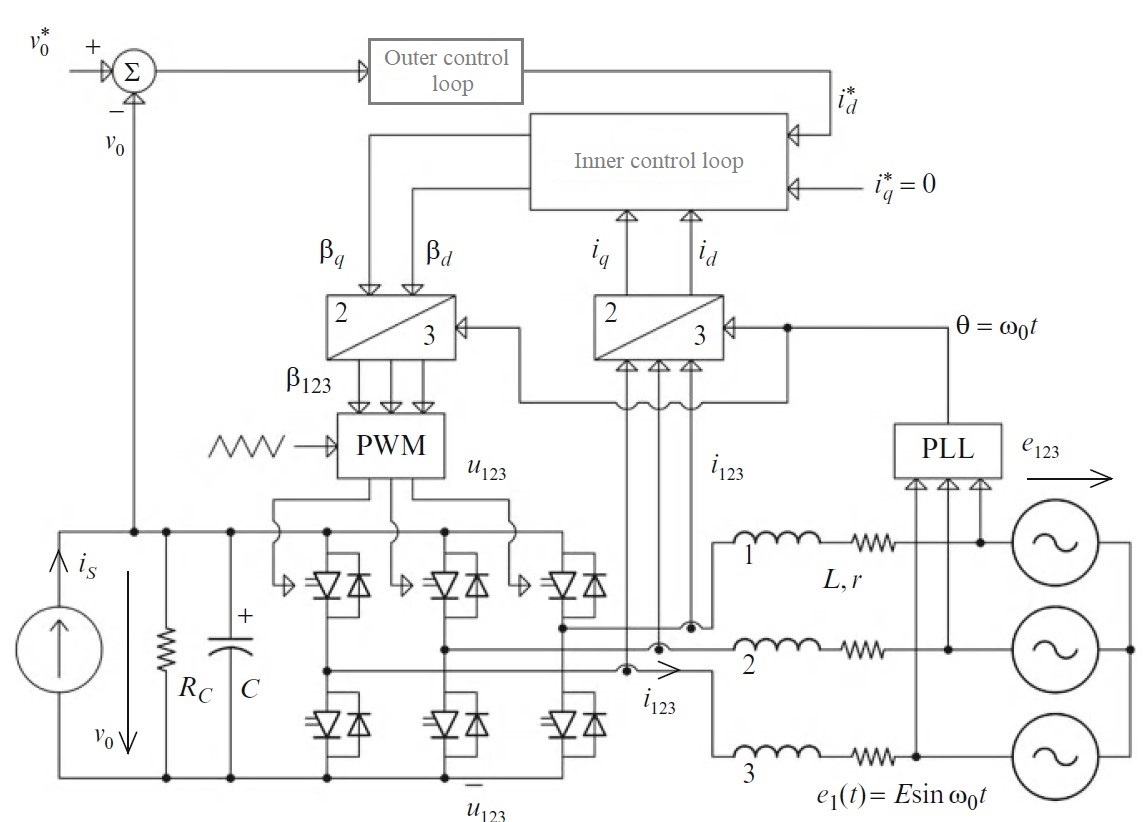}
    \caption{The vector control structure for grid-side converter \cite{vector_control_6}}
    \label{fig:vector_control_structure}
\end{figure}

In the inner loop detailed in Fig. \ref{fig:vecter_control_current_loop}, the currents in each d and q axis are controlled by a PI controller. Two cross connections among these controls are included to diminish interaction. The PI controllers parameters $K_{pC}$ and $T_{iC}$ are chosen by forcing the damping ratio and the time constant of the current closed-loop into desired values \cite{vector_control_5} \cite{vector_control_6}. The references $i^{\star}_d$ and $i^{\star}_q$ are filtered to compensate for the zeros induced in the closed-loop \cite{vector_control_5}, \cite{vector_control_6}.

\begin{figure}[H]
    \centering
    \includegraphics[scale=0.35]{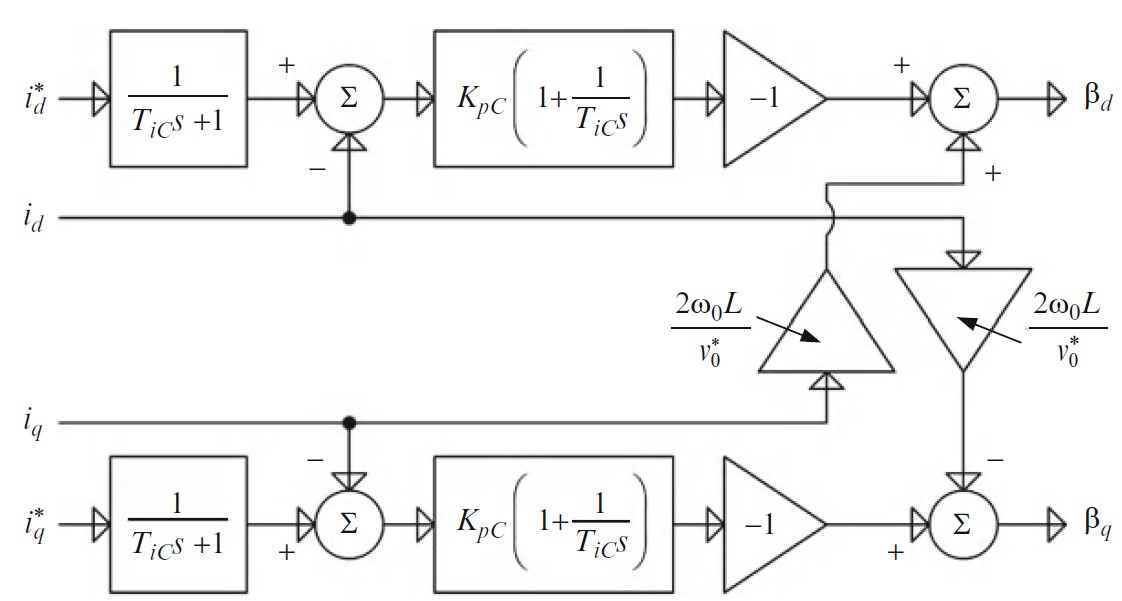}
    \caption{The vector control inner loop of current control \cite{vector_control_6}}
    \label{fig:vecter_control_current_loop}
\end{figure}

The outer loop of voltage control, is actually a loop to produce the desired value of $i_d$, which is the reference for the inner loop in Fig. \ref{fig:vector_control_structure}. As for the inner loop, the PI controllers parameters are chosen by forcing the damping ratio and the time constant of the voltage closed-loop into desired values  \cite{vector_control_5}, \cite{vector_control_6}. The time constant of this outer loop should be around 10-100 times higher that the one of the inner loop to allow independent design of the controller.

%\appendices
%\section{Proof of the First Zonklar Equation}
%Appendix one text goes here.
%
%% you can choose not to have a title for an appendix
%% if you want by leaving the argument blank
%\section{}
%Appendix two text goes here.
%
%
%% use section* for acknowledgement
%\section*{Acknowledgment}
%
%
%The authors would like to thank...

% Can use something like this to put references on a page
% by themselves when using endfloat and the captionsoff option.
\ifCLASSOPTIONcaptionsoff
  \newpage
\fi

\end{document}